\def\wt{\widetilde}
\DeclareMathOperator*{\argmin}{argmin}
\def\diag{\text{diag}}
\def\sat{\text{sat}}
\def\T{\text{T}}
\def\r{\text{ref}}
\def\and{\text{and}}
\def\NN{\text{NN}}
\def\geo{\text{geo}}
\def\gra{\text{gra}}
\newcommand*{\circled}[1]{\lower.7ex\hbox{\tikz\draw (0pt, 0pt)%
		circle (.5em) node {\makebox[1em][c]{\small #1}};}}
\newcommand{\mathletter}[1]{%
	\expandafter\newcommand\csname b#1\endcsname{\mathbb #1}
	\expandafter\newcommand\csname c#1\endcsname{\mathcal #1}
	\expandafter\newcommand\csname f#1\endcsname{\mathfrak #1}
	\expandafter\newcommand\csname til#1\endcsname{\widetilde #1}
	\expandafter\newcommand\csname ha#1\endcsname{\widehat #1}
	\expandafter\newcommand\csname bf#1\endcsname{\bf #1}
}%
\def\mathletters#1{\mathlettersB #1,,}
\def\mathlettersB#1,{\ifx,#1,\else\mathletter #1\expandafter\mathlettersB\fi}
\def\bee{\begin{equation}}
\def\ene{\end{equation}}
\def\beq{\begin{eqnarray}}
\def\enq{\end{eqnarray}}
\def\bmatri{\begin{bmatrix}}
	\def\ematri{\end{bmatrix}}
\newtheorem{lemma}{Lemma}
\newenvironment{proof}{\begin{IEEEproof}}{\end{IEEEproof}}
\def\TP{{\mathrm T}}
\def\mD{{\mathcal D}}
\def\mR{{\mathcal R}}
\title{Standoff Tracking Using DNN-Based MPC with Implementation on FPGA}
\author{Fei~Dong, Xingchen~Li, Keyou~You,~\IEEEmembership{Senior Member,~IEEE}, Shiji~Song,~\IEEEmembership{Senior Member,~IEEE} 
	\thanks{This work was supported by the National Natural Science Foundation of China (62203027 and 62033006), the China Postdoctoral Science Foundation (BX20220369 and 2022M710310), and a grant from the Guoqiang Institute, Tsinghua University. ({\em Corresponding author: Keyou You})}
	\thanks{F. Dong is with School of Automation Science and Electrical Engineering, Beihang University, Beijing 100191, China, and was a PhD student at Tsinghua University when this work was mainly completed. E-mail: dongf@buaa.edu.cn.}
	\thanks{X. Li, K. You, and S. Song are with Department of Automation, and Beijing National Research Center for Information Science and Technology, Tsinghua University, Beijing 100084, China. E-mail: lixc21@mails.tsinghua.edu.cn, youky@tsinghua.edu.cn, shijis@tsinghua.edu.cn.} 
}
\begin{document}
	\maketitle
	\begin{abstract}
This work studies the standoff tracking problem to drive an unmanned aerial vehicle (UAV) to slide on a desired circle over a moving target at a constant height. We propose a novel  Lyapunov guidance vector (LGV) field with tunable convergence rates for the UAV's trajectory planning and a deep neural network (DNN)-based model predictive control (MPC) scheme to track the reference trajectory. Then,  we show how to collect samples for training the DNN {\em offline} and design an integral module (IM) to refine the tracking performance of our DNN-based MPC. Moreover, the hardware-in-the-loop (HIL) simulation with an FPGA@$200$\si{MHz} demonstrates that our method is a valid alternative to embedded implementations of MPC for addressing complex systems and applications which is impossible for directly solving the MPC optimization problems.   
	\end{abstract}
		
\begin{IEEEkeywords}
Standoff tracking, model predictive control, deep neural network, field-programmable gate array, unmanned aerial vehicle.
\end{IEEEkeywords}
	
\section{Introduction}	
Using unmanned aerial vehicles (UAVs) for convoy protection and aerial surveillance has attracted broad attention. To achieve it, it is essential for the UAV to  {\em standoff track} a moving target, which not only permits a persistent tracking but also expands its surveillance area. In the literature, it is also called the target circumnavigation or encirclement  \cite{kokolakis2021robust,dong2019Target,matveev2017tight,swartling2014collective}.

	The first of this strategy lies in designing a good guidance law. In \cite{Lawrence2003Lyapunov}, a Lyapunov guidance vector (LGV) is initially proposed  and then extended in \cite{Frew2008Coordinated,pothen2017curvature,Dong2019Flight}. The issue on the non-adjustable convergence rate of these works is resolved in  \cite{Kapitanyuk2017A} with a gradient-based method. In this work, we directly propose a new LGV with easily tunable convergence rates, which contains the GVs in \cite{Lawrence2003Lyapunov,Frew2008Coordinated,pothen2017curvature,Dong2019Flight} as a special case.     
	
	The second is to regulate the UAV to track the LGV in the complex environment subject to dynamics constraint and state-control vector constraints. If there is no constraint, it is conceivable  to utilize the backstepping controller by regarding the guidance law as the reference signal. For example, Ref. \cite{ghommam2016quadrotor} adopts a structure of five backstepping steps that resembles the multi-loop proportional-integral-derivative (PID) controller in \cite{luukkonen2011modelling,gati2013open,yao2021singularity}.  While they cannot explicitly handle the state and control constraints, the model predictive control (MPC) becomes a natural alternative \cite{rawlings2017model}.  Since the MPC law is usually obtained by solving online optimization problems,  it is only applicable to ``simple" systems with moderate sizes \cite{compare_favato2021integral,cimini2020embedded,rubagotti2016real,bemporad2002explicit,karg2020efficient}.  If the size of the MPC optimization problem is small for linear time-invariant (LTI) system,  the so-called explicit MPC \cite{bemporad2002explicit} appears to be perfect as its control law has a piece-wise affine (PWA) form, meaning that it can be obtained via a simple search over a finite number of polyhedra. Unfortunately, this number grows {\em exponentially} with respect to the MPC size, including the numbers of prediction horizons, decision variables and their constraints. For instance, such a number is up to $1638$ for a single-input LTI system with $4$ box-constrained states and $10$ prediction horizon \cite{karg2020efficient}.
	
	How to implement the MPC on a resource-limited embedded platform in a fast and faithful way is a long-standing problem. With the development of the deep learning and the advanced RISC machine (ARM) or field-programmable gate array (FPGA), such a problem has attracted resurgent interest. 
Roughly speaking, the related works can be categorized into two types of approaches, depending on whether the MPC law is computed iteratively or approximated via a deep neural network (DNN). The works on developing tailored iterative solvers for an embedded platform include \cite{hartley2013predictive,lucia2017optimized,quirynen2018embedded,xu2021custom}. Specifically, Ref. \cite{quirynen2018embedded}  exploits the structure of the MPC optimization problem and develops \texttt{PRESAS} for online solving the sequential quadratic programs (QP).  A particle swarm optimization (PSO) method is designed in \cite{xu2021custom} to search the MPC solution in parallel.

	The other alternative that is closely related to this work is on the use of an {\em offline} trained DNN to implement the  MPC law. For example, a DNN is trained to approximate the PWA function of the explicit MPC law in \cite{karg2020efficient}, and a primal-dual learning framework is proposed in \cite{zhang2021near} to regulate a linear parameter-varying system. In fact, the DNN-based MPC has demonstrated promising results on resonant power converters \cite{lucia2020deep}, PMSMs  \cite{abu2022deep}, and UAVs  \cite{tagliabue2022efficient} whose DNN-based MPC is however only evaluated on a high-performance NVIDIA TX2 CPU@$2$\si{GHz} with Pytorch. Moreover, the DNN is adopted to warm start a sequence of primal active-set algorithms in \cite{chen2022large}. Note that it is also unknown how many iterations are required to obtain a certifiable solution of the MPC optimization problem.  In \cite{chan2021deep}, a correction factor  is further introduced, hoping to achieve offset-free set-point tracking. Clearly, there is a tradeoff between the approximation error of the MPC law and the size of the DNN, which is key to the performance of the closed-loop system, and how to quantify their relationship remains an open problem.

	In contrast to the aforementioned works, we first adopt a DNN-based MPC for the UAV to track our LGV and then design a simple integral module (IM) to refine the tracking performance.  Since we aim at the UAV's resource-limited onboard implementation,  the DNN size  cannot be too large which inevitably induces approximation errors. Thus, we do not expect that the DNN-based MPC is able to exactly track our LGV and thus propose the IM to further reduce the steady-state tracking error, allowing to use a relatively small size of the DNN. The idea is presented in the preliminary version of this article in \cite{dong2022deep}. Table \ref{tab_cmp} indeed clarifies the computing efficiency of implementing our DNN-based MPC on an FPGA. To speedup the use of an FGPA for a control designer, we introduce a verification and implementation framework that enables  to directly generate and verify \texttt{verilog} codes from algorithms developed in MATLAB. Moreover,  the hardware-in-the-loop (HIL) simulation with an FPGA@$200$\si{MHz} demonstrates that our method is a valid alternative to embedded implementations of MPC for managing complex systems and applications. Overall, our contributions are summarized as follows.

	\begin{table*}[!t]
		\renewcommand{\arraystretch}{1.3}
		\centering
		\caption{A comparison of the existing works.}
		\begin{threeparttable}
			\begin{tabular}{cccccccccccc}
				\toprule[1pt]
				 Ref. & Method & Systems & $\mathrm{dim}(\bm x)$ & $\mathrm{dim}(\bm u)$ & $N_p$ & $N_c$ & Platform & Latency & {Latency (200\si{MHz})\tnote{1}} \\
				\midrule
				\cite{cimini2020embedded} & Active-set & {Linear} & 2 & 2 & 3 & 1 & F28335 DSP ($150$\si{MHz}) & $< 0.3$\si{ms} & $< 0.23$\si{ms} \\
				\hline
				\cite{quirynen2018embedded} & \texttt{PRESAS} & {Nonlinear} & 11 & 4 & 20 & 20 & ARM Cortex-A7 ($900$\si{MHz}) & $6.44$\si{ms} & $28.98$\si{ms} \\			
				\hline
				\multirow{2}*{\cite{xu2021custom}} & \multirow{2}*{PSO} & \multirow{2}*{Nonlinear} & \multirow{2}*{3} & \multirow{2}*{3} & \multirow{2}*{5} & \multirow{2}*{1} & ARM Cortex-A9 ($800$\si{MHz}) & $60.62$\si{ms}  & $242.48$\si{ms} \\
				~ & ~ & ~ & ~ & ~ & ~ & ~
				& Altera FPGA ($50$\si{MHz}) & $1.09$\si{ms}  & $0.27$\si{ms} \\
				\hline
				{\cite{karg2020efficient}} & {DNN} & {Linear} & 4 & 1 & 10 & 10 & {ARM Cortex-M0 ($48$\si{MHz})} & $3.8$\si{ms} & $0.912$\si{ms}\\
				\hline
				{\cite{zhang2021near}} & {DNN} & {Linear} & {4} & {3} & {3} & {3} & TC27x ECU DSP ($200$\si{MHz}) & $1.8$\si{ms} & $1.8$\si{ms} \\
				\hline
				{\cite{lucia2020deep}} & {DNN} & {Nonlinear} & {2} & {2} & {10} & {10} & {FPGA\tnote{2}}  & $2.1\upmu$\si{s} & -- \\
				\hline
				{\cite{tagliabue2022efficient}} & {DNN} & {Linear} & {8} & {3} & {30} & {30} & {NVIDIA TX2 CPU\tnote{3}~ ($2$\si{GHz})}   & $1.66$\si{ms} & $16.6$\si{ms} \\
				\hline
				\multirow{2}*{This work} & \multirow{2}*{DNN} & \multirow{2}*{Nonlinear} & \multirow{2}*{12} & \multirow{2}*{4} & \multirow{2}*{20} & \multirow{2}*{10} 
				& ARM Cortex-A53 ($200$\si{MHz}) & $1.030$\si{ms} & {$1.030$\si{ms}} \\
				~ & ~ & ~ & ~ & ~ & ~ & ~
				& Zynq FPGA ($200$\si{MHz}) & $0.126$\si{ms} & $0.126$\si{ms} \\
				\bottomrule[1pt]
			\end{tabular}
			\begin{tablenotes}
				\footnotesize
\item[1] Eq. (9) in \cite{hartley2013predictive} is adopted to covert the reported latency in the reference work to that on the same type processor @200\si{MHz}.
				\item[2] The specifics of FPGA are not provided in \cite{lucia2020deep}.
				\item[3] The NVIDIA TX2 CPU contains a dual-core NVIDIA Denver 2 processor @$2$\si{GHz} and a quad-core ARM Cortex-A57 processor @$2$\si{GHz}.
			\end{tablenotes}
		\end{threeparttable}
		\label{tab_cmp}
	\end{table*}

	\begin{itemize}
		\item [(a)] A novel LGV law with tunable convergence rates is proposed for standoff tracking a moving target, which covers the GV in \cite{Lawrence2003Lyapunov,Frew2008Coordinated,pothen2017curvature,Dong2019Flight} as a special case.    
		\item [(b)] We propose a DNN-based MPC with a simple IM to track the LGV. 
		\item [(c)] An easily accessible  framework is introduced for fast generating and verifying \texttt{verilog} codes from algorithms developed in MATLAB.
	\end{itemize}
	
The rest of the paper is organized as follows. In Section \ref{sec2}, we describe the standoff tracking problem.  We propose a new LGV for planning the standoff tracking trajectory in Section \ref{sec_3} and design MPCs to track the reference trajectory in Section \ref{sec_controller}. In Section \ref{sec5}, we collect training samples  and propose a DNN-based MPC with an IM track our LGV. The HIL simulation with an FPGA@$200$\si{MHz} is performed in Section \ref{sec6}, and some conclusion remarks are drawn in Section \ref{sec7}.
	
	{\em Notation}: Throughout this paper, scalars, vectors, and matrices are respectively denoted by lowercase letters, lowercase bold letters, and uppercase letters, e.g. $a$, $\bm a$, and $A$. The vectors $\bm 0_{n}, \bm 1_{n}$ and the matrix $I_n$ denote the zero and one vectors and an identity matrix with specified dimensions. The superscript $(\cdot)^\T$ represents the matrix transpose. The symbol $\Vert \cdot \Vert$ denotes the Euclidean norm. The function $\text{col}(\cdot)$ and $\text{blkdiag}(\cdot)$ return a column vector by stacking each vector on top of the other and a block diagonal matrix, respectively.  For a vector $\bm x$, then $\bm x \in [0,\bar x]$ means that every element of $\bm x$ belongs to this interval. Moreover, $\bm x(t)$ and $\bm x(k)$ represent the continuous- and discrete-time state vectors with the relation that $t=t_0+k\tau$, where $t_0$ is the initial time instant and $\tau$ is the sampling period. The vector ${\bm x}(i|k)$ represents  the  predicted state at time step $k+i$ based on $\bm x(k)$.

	\section{Problem Formulation} \label{sec2}
	
	The standoff tracking of this work requires a quadrotor UAV to slide on a desired circle over a moving target at a constant height, and is also called the target circumnavigation or target encirclement in \cite{kokolakis2021robust,matveev2017tight,dong2019Target,swartling2014collective}.  	

	\subsection{The kinematic model of the moving target}
	The moving target of interest travels in the single-integrator form
	\begin{align} \label{target}
	\dot {\bm p}_o(t) = \bm {v}_o(t),
	\end{align}
	where $\bm p_o(t)=[x_o(t),y_o(t),z_o(t)]^\T$ and $\bm v_o(t)\in \mathbb{R}^3$ respectively denote its position and velocity. 	
	\subsection{The dynamical model of the quadrotor UAV}
	
	To standoff track the target in \eqref{target}, we adopt a quadrotor UAV in \cite{luukkonen2011modelling} whose Newton-Euler dynamical model is expressed by  	\begin{equation}\label{model}
	\begin{split}
	\ddot {\bm \xi}(t) = & \bm g + {m}^{-1} \left(R(\bm \eta(t)) \bm t_{b}(t)-\diag(d_x, d_y, d_z)  \dot{\bm \xi}(t)\right),\\
	\ddot {\bm \eta}(t) =& J^{-1}(\bm \eta(t)) \left( \bm \tau_{b}(t) -C(\bm \eta(t), \dot {\bm \eta}(t)) \dot {\bm \eta}(t)  \right),
	\end{split}
	\end{equation}
	where ${\bm \xi}(t) = [x_q(t), y_q(t), z_q(t)]^{\T}$, $\bm \eta(t) = [\phi(t), \theta(t), \psi(t)]^{\T}$ denote the 3D position and the vector of Euler angles in the inertial frame, respectively.   $R(\bm \eta(t))$, $J(\bm \eta(t))$, $C(\bm \eta(t), \dot {\bm \eta}(t))$, $\bm t_{b}(t)$, and $\bm \tau_{b}(t)$ denote the rotation matrix from the body frame to the inertial frame, the Jacobian matrix, the Coriolis matrix,  the total thrust and torque in the body frame, respectively. Moreover,  $\bm t_{b}(t)$ and $\bm \tau_{b}(t)$  are determined by the control input to the rotor in the following form
	\begin{equation} \label{thrust}
	\begin{split}
	\bm t_{b}(t) &= \bmatri 0, & 0, & l_c \sum_{i=1}^{4} u_i(t)\ematri ^{\T}\in\mathbb{R}^3,\\
	\bm \tau_{b}(t) &= \bmatri l \cdot l_c \left(-u_2(t) +u_4(t)\right)  \\ l\cdot l_c \left(-u_1(t)+u_3(t)\right) \\ d_c\left(-u_1(t)+u_2(t)-u_3(t)+u_4(t)\right)   \ematri \in\mathbb{R}^3,
	\end{split}
	\end{equation} 
	where $\bm u(t)= [u_1(t), u_2(t), u_3(t), u_4(t)]^\T \in[0,\bar u]$ and other unspecified  quantities are constant parameters. More details of \eqref{model} can be found in \cite[Chapter 2]{luukkonen2011modelling}.  In simulations,  we  directly obtain the  codes  from \cite{mathworks} for the UAV's dynamics with a sampling period $\tau=0.1$.
	
	\subsection{The objective of standoff tracking}
	Let the relative 3D position between the target and UAV be
		\begin{equation}\label{eq_rea}
		[x(t),y(t),z(t)]^{\T}=[x_q(t) -x_o(t),y_q(t) -y_o(t),z_q(t) -z_o(t)]^{\T}
		\end{equation} 
		and define their relative velocity and distance in the horizon plane as
		\begin{equation}\label{eq_rang}
		\bm v(t) =[\dot x(t),\dot y(t)]^{\T}, ~~r(t)= \left(x^2(t) + y^2(t)\right)^{1/2}.
		\end{equation} 
	Then, the standoff tracking requires the UAV in \eqref{model} to loiter over the target in \eqref{target} such that
	\begin{align}\label{obj}
	\lim_{t\rightarrow \infty} r(t) = r_d,~ \lim_{t\rightarrow \infty} z(t) = z_d,~ \and~\lim_{t\rightarrow \infty} \Vert \bm v(t) \Vert = v_d,   
	\end{align}
where the triple $[r_d, z_d, r_d]^\T$ denotes the desired standoff tracking pattern. 
	
	To achieve \eqref{obj},  the maximum  of the UAV's speed  $\Vert \dot{\bm \xi}(t) \Vert$ is assumed to be sufficiently greater than that of the target \cite{Frew2008Coordinated}, and the UAV is able to localize the target with onboard sensors \cite{Dong2019Flight}.

	\section{Lyapunov Guidance Vector for Standoff Tracking} \label{sec_3}

	In general, the guidance commands  for the standoff tracking of the UAV are separately designed in the horizontal plane and its vertical direction. Since the guidance in the vertical direction is trivial, this section focuses on the guidance design in the horizontal plane via a new LGV. 
	\subsection{The gradient-based LGV}
	
	\begin{figure}[!t]
		\centering{\includegraphics[width=0.7\linewidth]{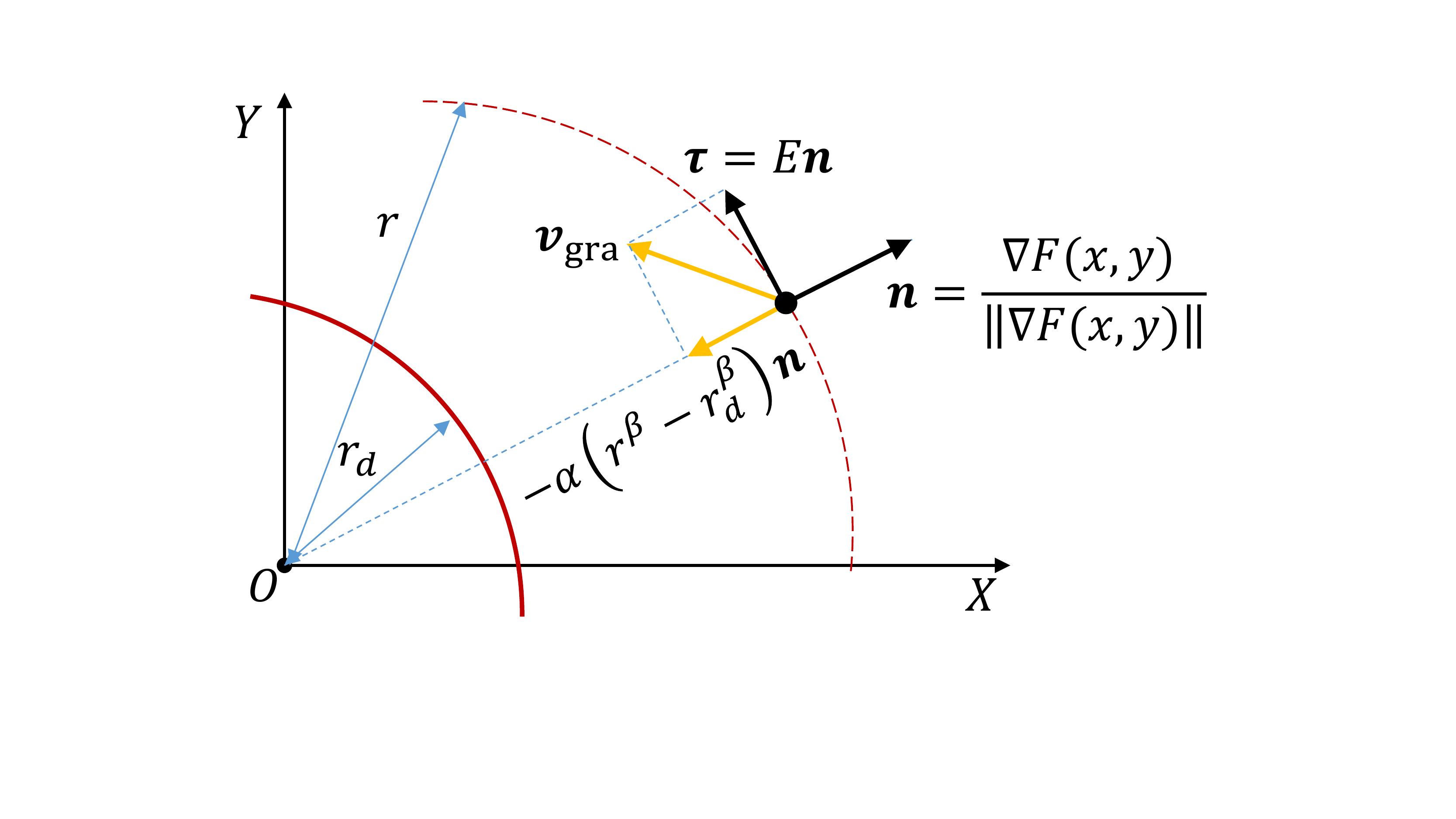}}
		\caption{Illustration of the gradient-based LGV in \eqref{eqgradient}.}
		\label{figgradient}
	\end{figure}
	We build a Cartesian frame with the target position as its origin  and define a scalar field as follows
	\begin{align}\label{eqf}
	F(x,y)=x^2+y^2.
	\end{align}
	Then, the objective of $\lim_{t \rightarrow \infty} r(t) =r_d$ requires the UAV to slide on the following isoline \cite{dong2021isoline}
	\[ \mathcal{L}(r_d)= \left\{ [x,y]^\T\in\mathbb{R}^2|F(x,y)=r_d^2 \right\},\]
	which is represented by the red solid arc in Fig.~\ref{figgradient}.   
	To this goal, a term of $- \alpha \left(r^\beta(t) -r_d^\beta\right)  {\bm n}(t)$ is naturally needed for the GV to pull the UAV towards $ \mathcal{L}(r_d)$ at the negative gradient direction with a force depending on the error of $r^\beta(t)-r_d^\beta$, where $\alpha$ is a positive parameter, $\beta \in \mathbb{N}_+$ is a constant exponent, $r(t)$ is the range (c.f. \eqref{eq_rang}) from the UAV to the target, and $\bm n(t)$ denotes the unit gradient vector, i.e.,
	\begin{align}\label{eqnab}
	{\bm n}(t)= \frac{\nabla F(x(t),y(t))}{\Vert \nabla F(x(t),y(t)) \Vert} = \bmatri x(t)/r(t)\\ y(t)/r(t) \ematri
	\end{align} 
	where $\nabla F(x,y) = \bmatri 2 x, 2y\ematri^{\TP}$ is  the gradient of \eqref{eqf}.  
	If the UAV reaches $\mathcal{L}(r_d)$, i.e., $r(t)=r_d$, it is required to move along $\mathcal{L}(r_d)$. Thus, an orthogonal vector $\bm \tau(t)$ to the unit gradient vector ${\bm n}(t)$ is further introduced, i.e.,
	\begin{align}\label{direction}
	{\bm \tau}(t) =& \bmatri \cos \gamma &-\sin \gamma \\ \sin \gamma & \cos \gamma \ematri  {\bm n}(t) = E {\bm n}(t),
	\end{align}  
	where $\gamma = \pi/2$ or $-\pi/2$ determines the rotation angle and without loss of generality, we set $\gamma = \pi/2$ in this work.

	Then,  we design a GV as follows
	\begin{align}\label{eqgradient}
	\bm v_{\gra}(t) =  {\bm \tau}(t) - \alpha \left(r^\beta(t) -r_d^\beta\right)  {\bm n}(t).
	\end{align}
	See Fig.~\ref{figgradient} for an illustration.

	For the objective of $\lim_{t\rightarrow\infty} \Vert \bm v(t) \Vert = v_d $, we scale \eqref{eqgradient} as  
	\begin{align*} 
	&\bar {\bm v}_{\gra}(t) = v_d \frac{\bm v_{\gra}(t)}{\Vert \bm v_{\gra}(t) \Vert }\nonumber\\
	&= \frac{-v_d/r(t)}{ \sqrt{1+\alpha^2 \left(r^\beta(t)-r_d^\beta\right)^2}} \bmatri  \alpha x(t)\left(r^\beta(t)-r_d^\beta\right)  -y(t) \\ 
	\alpha y(t) \left(r^\beta(t)-r_d^\beta\right)  +x(t)\ematri.
	\end{align*}	
	
	Instead of using a constant $\alpha$ in \cite{yao2021singularity},  we further use its time-varying version to accelerate the GV convergence, i.e., 
	\begin{align} \label{eqalpha}
	\alpha(t) =\left({4 r^\beta(t) r_d^\beta}\right)^{-1/2}.
	\end{align}
	Inserting \eqref{eqalpha} into $\bar {\bm v}_{\gra}(t)$ yields our LGV, i.e., 
	\begin{align} \label{lya_cont}
	\begin{split}
	\bm v_L(t) = &~\frac{-v_d}{r(t)\left(r^\beta(t)+r_d^\beta\right) } \\
	& \times \bmatri x(t)\left(r^\beta(t) -r_d^\beta\right) + 2y(t)\sqrt{r^\beta(t) r_d^\beta} \\ y(t)\left(r^\beta(t)-r_d^\beta\right) -2x(t)\sqrt{r^\beta(t) r_d^\beta} \ematri.
	\end{split}
	\end{align}

	\begin{lemma} \label{lemma1}
		Consider a stationary target located at $[x_o,y_o]^{\T}$ and the position $[x_q(t), y_q(t)]^\T$ governed by the LGV in \eqref{lya_cont}, i.e., $[\dot x_q(t), \dot y_q(t)]^\T = \bm v_L(t)$.  Then, it holds that
		\begin{align*}
		\lim_{t\rightarrow \infty} r^\beta(t) = r_d^\beta ~\text{and}~\lim_{t\rightarrow \infty} v(t) = v_d
		\end{align*}
	\end{lemma}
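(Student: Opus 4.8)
The plan is to exploit that, since the target is stationary, the relative position $[x(t),y(t)]^\T$ obeys the same dynamics as $[x_q(t),y_q(t)]^\T$, i.e. $[\dot x(t),\dot y(t)]^\T=\bm v_L(t)$, so that $\bm v(t)=\bm v_L(t)$ and $r(t)$ is exactly the Euclidean norm of the state. The second limit then comes almost for free: expanding $\Vert\bm v_L(t)\Vert^2$ and using the identity $\big(r^\beta-r_d^\beta\big)^2+4\,r^\beta r_d^\beta=\big(r^\beta+r_d^\beta\big)^2$ shows the bracketed vector in \eqref{lya_cont} has norm $r(r^\beta+r_d^\beta)$, which cancels the scalar prefactor and gives $\Vert\bm v_L(t)\Vert\equiv v_d$ for every $t$ with $r(t)>0$. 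Hence $v(t)=v_d$ for all $t$, not merely asymptotically.

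For the first limit I would collapse the planar dynamics to a scalar ODE for $r$. Taking the inner product of $[x(t),y(t)]$ with $\bm v_L(t)$, the two terms carrying $\sqrt{r^\beta r_d^\beta}$ cancel, leaving $x\dot x+y\dot y=-v_d\,r\,(r^\beta-r_d^\beta)/(r^\beta+r_d^\beta)$; dividing by $r$ yields
\[
\dot r(t)=-v_d\,\frac{r^\beta(t)-r_d^\beta}{r^\beta(t)+r_d^\beta}.
\]
Because the right-hand side is negative for $r>r_d$ and positive for $0<r<r_d$, the interval $(0,\infty)$ is forward invariant and $r=r_d$ is its unique equilibrium; in particular $r(t)\ge\min\{r(0),r_d\}>0$ along the whole trajectory, so the LGV and all the expressions below are well defined whenever $r(0)>0$.

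Finally I would close the argument with a Lyapunov function. Set $s(t)=r^\beta(t)$ and $V(t)=\tfrac12\big(s(t)-r_d^\beta\big)^2$; the chain rule together with the displayed ODE gives $\dot V=-\beta v_d\,r^{\beta-1}(s-r_d^\beta)^2/(s+r_d^\beta)\le 0$, with equality only at $r=r_d$, since $r^{\beta-1}>0$ and $s+r_d^\beta>0$ on the trajectory. As $V(t)\ge 0$ is non-increasing and $\dot V(t)=0$ only when $r(t)=r_d$, an invariance (LaSalle) argument---or, equivalently, direct analysis of the monotone scalar ODE for $r$ whose unique equilibrium is $r=r_d$---yields $r^\beta(t)\to r_d^\beta$, completing the proof.

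The only mildly delicate point I expect is the well-posedness bookkeeping, namely verifying that $r(t)$ stays strictly positive so that dividing by $r(t)$ and the factor $r^{\beta-1}$ in $\dot V$ cause no trouble; the two vector computations (the norm of $\bm v_L$ and the inner product $[x,y]\cdot\bm v_L$) are routine once one notices the identity $\big(r^\beta-r_d^\beta\big)^2+4\,r^\beta r_d^\beta=\big(r^\beta+r_d^\beta\big)^2$.
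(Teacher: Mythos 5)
Your proposal is correct and follows essentially the same route as the paper: the norm identity $\left(r^\beta-r_d^\beta\right)^2+4r^\beta r_d^\beta=\left(r^\beta+r_d^\beta\right)^2$ gives $\Vert \bm v_L(t)\Vert\equiv v_d$ immediately, and the Lyapunov function $\tfrac12\left(r^\beta(t)-r_d^\beta\right)^2$ together with the radial reduction $\dot r(t)=-v_d\left(r^\beta(t)-r_d^\beta\right)/\left(r^\beta(t)+r_d^\beta\right)$ (which the paper itself obtains, both inside \eqref{eql1} and again in \eqref{eqdotr}) yields the range limit. The only difference is presentational: you spell out the positivity of $r(t)$ and the invariance/monotone-ODE step, whereas the paper compresses this into a direct citation of \cite[Theorem 4.2]{Khalil2002Nonlinear}.
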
 
	where $v(t)=\Vert \bm v(t)\Vert$, and $r(t), \bm v(t)$ are given in \eqref{eq_rang}.
	\begin{proof}
		First,  $\lim_{t\rightarrow \infty} v(t) = v_d$ trivially holds as it follows from \eqref{lya_cont} that $\Vert\bm v_L(t)\Vert=v_d$. Then, we consider a Lyapunov function candidate as 
		\begin{align}\label{lyfun}
		L_1(r(t)) = \frac{1}{2}\left( r^\beta(t) - r_d^\beta \right)^2,
		\end{align}
		and take its derivative along with \eqref{lya_cont} 
		\begin{align} \label{eql1}
		\dot L_1(r(t)) 
		&=\beta r^{\beta-1}(t)\left(r^\beta(t) -r_d^\beta\right) {r^{-1}(t)} [x(t), y(t)]^\T \bm v_L(t)\nonumber\\
		&=-\frac{2\beta v_d r^{\beta-1}(t) }{r^\beta(t)+r_d^\beta }L_1(r(t))\\
		&\le 0.\nonumber
		\end{align}

		Clearly, $L_1(r_d)=0$, and $L_1(r(t))> 0$ for any $r(t)\neq r_d$. By \cite[Theorem 4.2]{Khalil2002Nonlinear}, it implies that $r^\beta(t)$ asymptotically converges to $r_d^\beta$, i.e., $\lim_{t\rightarrow \infty} r^\beta(t) = r_d^\beta$.
	\end{proof}

	\begin{figure}[!t]
		\centering{\includegraphics[width=0.7\linewidth]{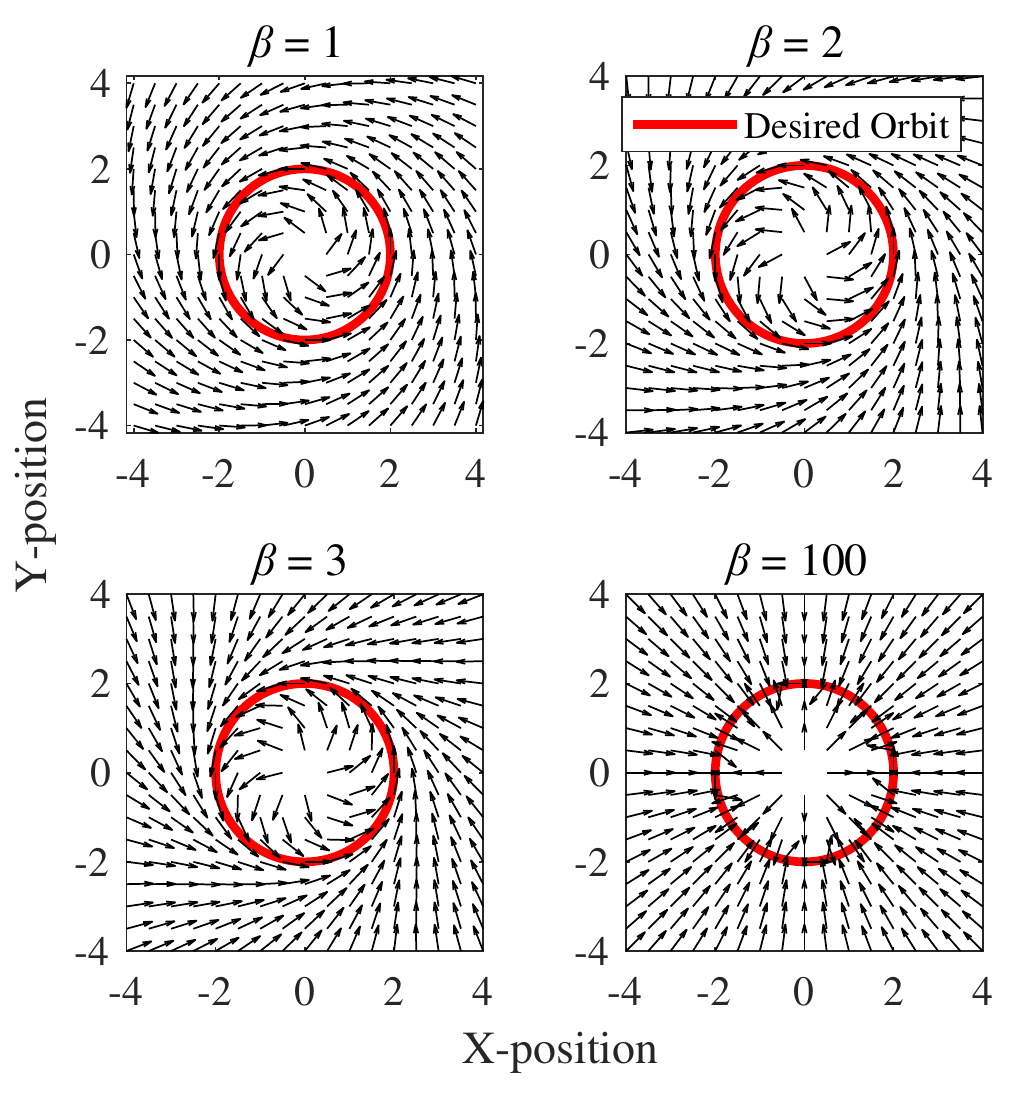}}
		\caption{The LGV fields with different exponents $\beta\in\{1,2,3,100\}$.}
		\label{figmind}
	\end{figure}
	\begin{figure}[!t]
		\centering{\includegraphics[width=0.65\linewidth]{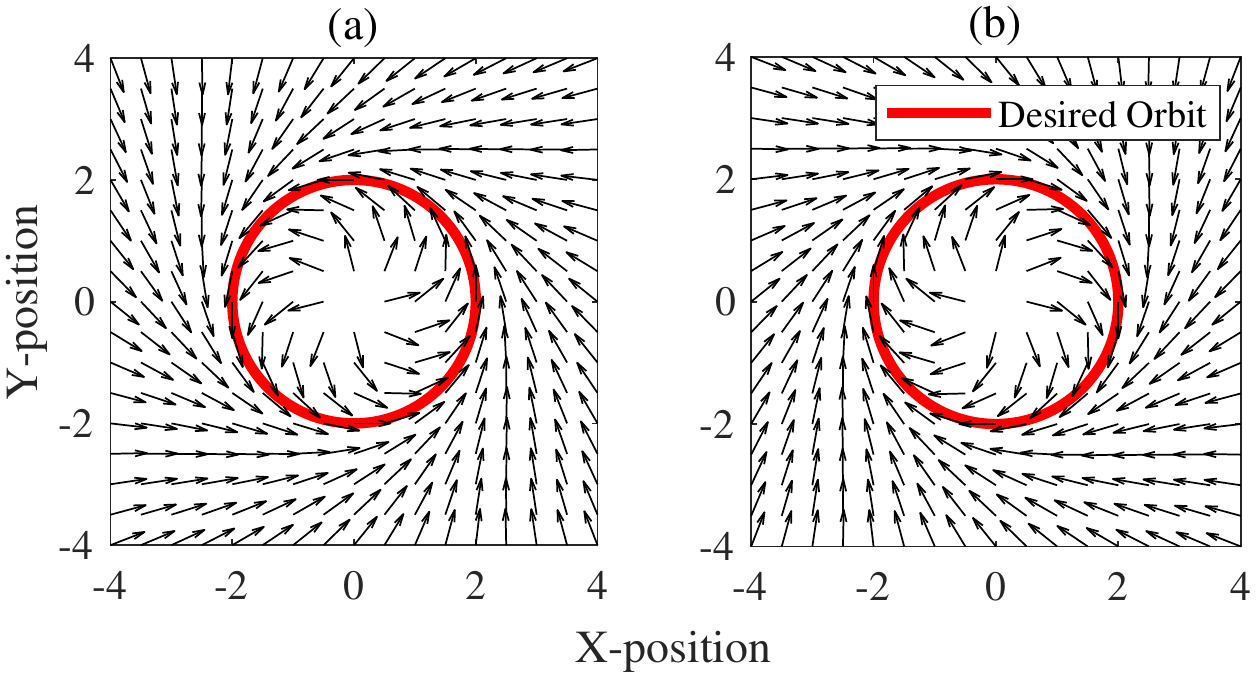}}
		\caption{The LGV fields: (a) the counterclockwise direction;  (b) the clockwise direction.}
		\label{figdirection}
	\end{figure}
	
	If  $r(t)\ge r_d$, which  usually holds for $t=t_0$ in the standoff tracking problem \cite{Matveev2011Range}, it follows from \eqref{eql1} that
	\begin{align}\label{eql11} 
	\dot L_1(r(t)) \le -\left({\beta v_d }/{r_d} \right)L_1 (r(t)).
	\end{align}
	Jointly with \eqref{lyfun}, it implies that the larger $\beta$, the faster the decreasing rate of $r(t)$ to $r_d$. On the other hand, if $r(t)< r_d$, one can easily obtain that $\dot L_1(r(t)) >-\left({\beta v_d }/{r_d} \right)L_1 (r(t)).$ Jointly with Lemma \ref{lemma1} and \eqref{lyfun}, it implies that $r(t)$ will increase to $r_d$ with an exponential rate faster than ${\beta v_d }/{r_d}$.  From this perspective, increasing $\beta$ can speed up the convergence of the LGV, which is also verified by Fig.~\ref{figmind}. Since there are physical input limits for the UAV in \eqref{model}, e.g.,  ${\bm u}(t)\in[0,\bar u]$, the UAV cannot follow an LGV with an arbitrarily large $\beta$.  In fact, $|\dot r(t)|\le v_d$ implies that the convergence rate of the LGV has an upper bound. The fastest one needs manual tuning to match this limit as there is no theoretical result to justify how to achieve it.  Note that there is a trade-off between the convergence rate and the computational burden in \eqref{lya_cont}. In our simulations, we fix $\beta=3$ for an illustration. Interestingly, the case of $\beta=2$ leads to the GV method in \cite{Frew2008Coordinated,pothen2017curvature,Dong2019Flight}. Thus, we not only further justify their design but also generalize it to obtain faster convergence rates.  A discrete-time version of \eqref{lya_cont} with $\beta=2$ is also devised in our previous work \cite{Dong2019Flight}, where the sampling frequency should be faster than an explicit lower bound to ensure its effectiveness.

	The LGV  field in \eqref{lya_cont} will guide the UAV to encircle the target in a counterclockwise direction. If the desired direction is clockwise,  we simply let $\gamma = - \pi/2$ in \eqref{eqnab}  and the LGV is  modified as
	\begin{align} \label{lya_cont1}
	\begin{split}
	\bm v_L(t) =&~ \frac{-v_d}{r(t)\left(r^\beta(t)+r_d^\beta\right) }\\
	& \times \bmatri x(t)\left(r^\beta(t) -r_d^\beta\right) - 2y(t)\sqrt{r^\beta(t) r_d^\beta} \\ y(t)\left(r^\beta(t)-r_d^\beta\right) +2x(t)\sqrt{r^\beta(t) r_d^\beta} \ematri.
	\end{split}
	\end{align}
See Fig.~\ref{figdirection} for an illustration. There is no loss of generality to select any direction, as one can achieve the same results for the other direction.  Note that if the controller continuously  switches its direction, it will potentially destabilize the closed-loop system. 
	
	\subsection{The Justification of our LGV from the geometry perspective}
	\begin{figure}[t!]
		\centering{\includegraphics[width=0.6\linewidth]{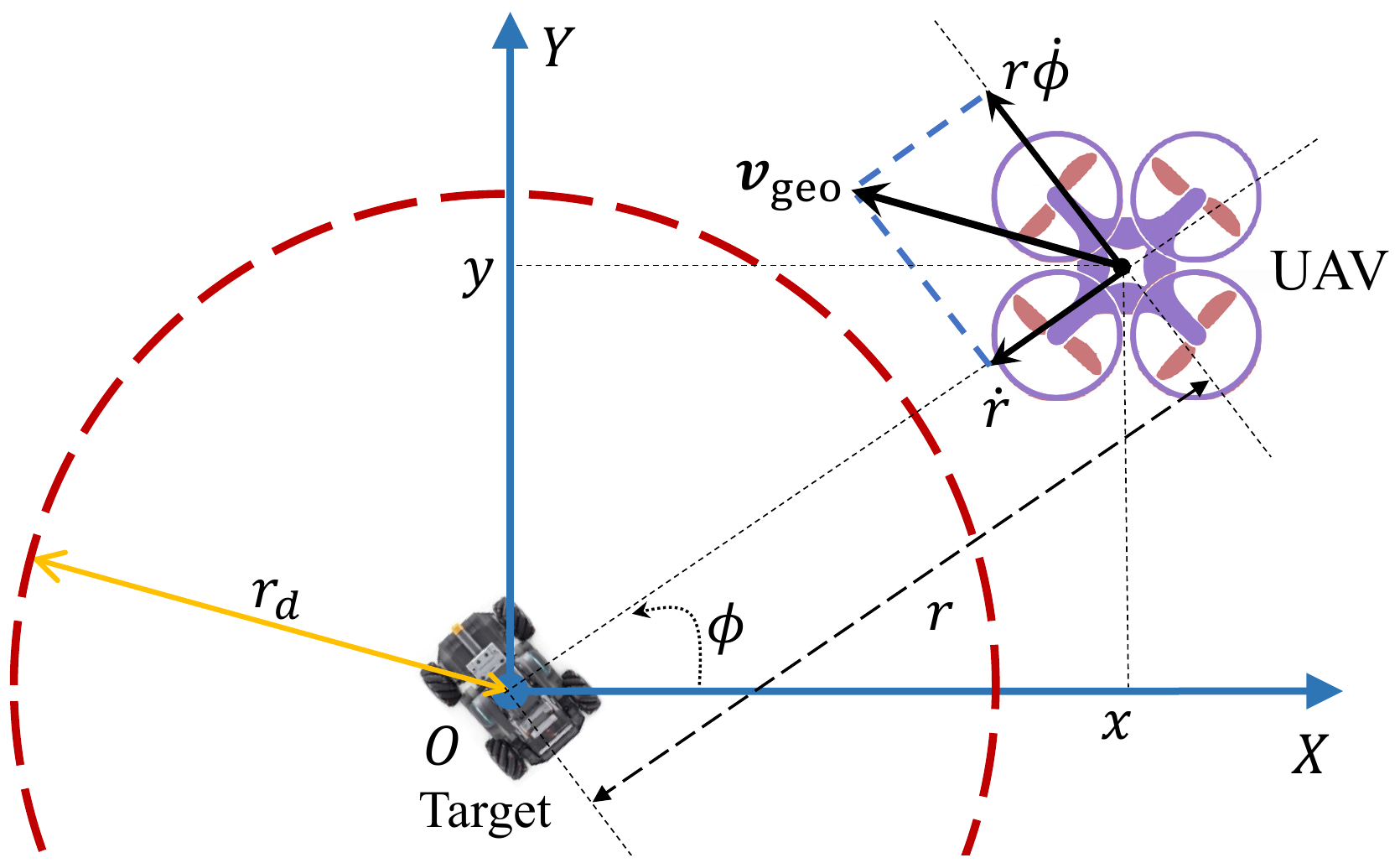}}
		\caption{Coordinate frame centered at the target.}
		\label{figpolar}
	\end{figure}
	We  further justify our LGV in \eqref{lya_cont} from the geometry perspective.  As illustrated in Fig.~\ref{figpolar}, we establish a polar frame centered at the target and denote the coordinates of the UAV by the radius $r(t)$ and angle $\phi(t)$, respectively. Then, their relative position $x(t) = x_q(t) -x_o(t)$ and $y(t)=y_q(t) -y_o(t)$ on the horizon plane are expressed as  
	\begin{align*}
	x(t) = r(t) \cos \phi(t)~ \text{and}~ y(t) = r(t)\sin \phi(t).
	\end{align*}
	Taking the time derivative of $x(t)$ and $y(t)$ leads to that
	\begin{align} \label{eq_convert}
	\bmatri \dot x(t) \\ \dot y(t) \ematri = \bmatri \cos \phi(t) & -\sin \phi(t)  \\ \sin \phi(t) & \cos \phi(t) \ematri  \bmatri \dot r(t) \\ r(t)\dot \phi(t) \ematri
	\triangleq \bm v_{\geo}(t).
	\end{align}
	Thus, we can first design the radial velocity $\dot r(t)$ and tangential velocity $r(t)\dot \phi(t)$ in the polar frame and then convert them into the Cartesian frame by the rotation matrix in \eqref{eq_convert}.
	
	Consider the following Lyapunov function candidate
	\begin{align}
	L_2(r(t)) = \frac{1}{2} \left(r^\beta(t) -r_d^\beta  \right)^2.
	\end{align}
	Taking the derivative of $L_2(r(t))$ yields that
	\begin{align}
	\dot L_2(r(t)) = \beta\left(r^\beta(t) -r_d^\beta\right)r^{\beta-1}(t) \dot r(t).
	\end{align}
	To ensure the negativeness of $\dot L_2(r(t))$, we design the radial velocity $\dot r(t)$ as 
	\begin{align} \label{dot_r}
	\dot r(t) = -\alpha v_d \left(r^\beta(t) -r_d^\beta\right),~ \alpha >0.
	\end{align} 
	It immediately follows from \eqref{dot_r} that \[\dot L_2(r(t))=-\alpha \beta v_d r^{\beta-1}(t) \left(r^\beta(t)-r_d^\beta\right)^2,~ \lim_{t\rightarrow \infty} r^\beta(t) = r_d^\beta.\]
	
	Moreover, let the tangential velocity $r(t)\dot \phi(t)$ satisfy that  
	\begin{align} \label{eqrphi}
	\dot r^2(t) + \left(r(t) \dot \phi(t)\right)^2 = v_d^2.
	\end{align}
	By \eqref{dot_r}, it requires that $\alpha \left(r^\beta(t) -r_d^\beta\right)\le 1$ for any $r(t)> 0$. To this end, we select $\alpha$ as 
	\begin{align} \label{eqalpha2}
	\alpha = \left(r^\beta(t) +r_d^\beta\right)^{-1}.
	\end{align}
	Inserting \eqref{eqalpha2} into \eqref{dot_r} and \eqref{eqrphi} yields the following radial and tangential velocities
	\begin{align} \label{eqdotr}
	\begin{split}
	\dot r(t) = -v_d \frac{ r^\beta(t) -r_d^\beta}{r^\beta(t) +r_d^\beta},~
	r(t) \dot \phi(t)= \pm v_d \frac{ 2\sqrt{r^\beta(t) r_d^\beta}}{r^\beta(t) +r_d^\beta}
	\end{split}.
	\end{align}
	Note that the direction of $\dot r(t)$ always points to the desired circle, i.e., the red one in Fig.~\ref{figgradient}, and that of $r(t) \dot \phi(t)$ needs to be manually specified as either the counterclockwise ``$+$" or clockwise ``$-$" direction. This is consistent with the value selection of $\gamma$ in \eqref{direction}. Combining \eqref{eq_convert} with \eqref{eqdotr}, one can easily obtain the LGV in \eqref{lya_cont} and \eqref{lya_cont1}. 
	
	\subsection{Trajectory planning by the LGV} \label{sub_11}
	
	\begin{figure}[!t]
	\centering{\includegraphics[width=0.7\linewidth]{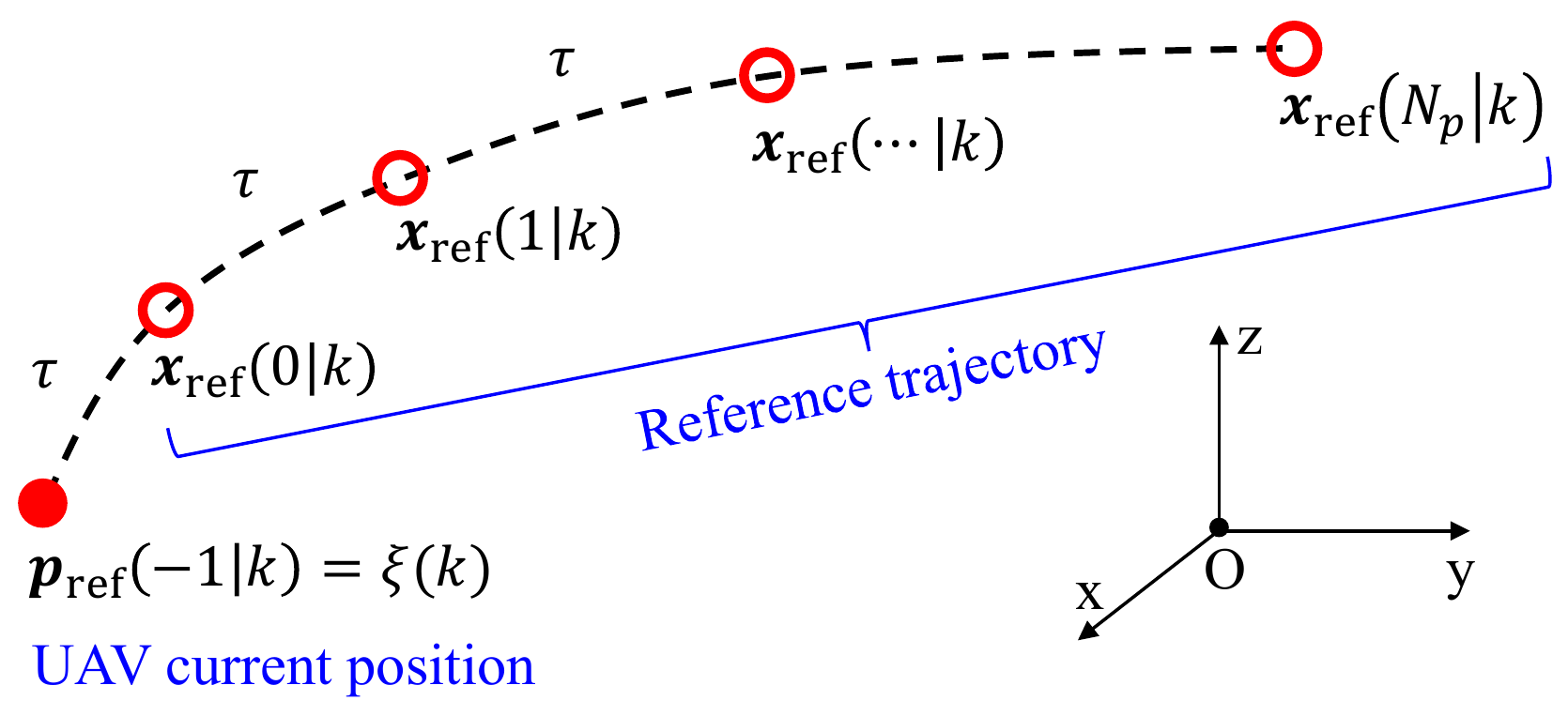}}
	\caption{Illustration of trajectory planning by the LGV.}
	\label{figsplanning}
\end{figure}
	
	 At the UAV position $\bm \xi(k)$, a reference trajectory is planned via the discretized LGV in \eqref{lya_cont} with a sampling period $\tau$, i.e.,
	\begin{align*}
	\bm x_{\r}  (i|k) = \bmatri\bm p_{\r}^{\T}(i|k),\bm 0_{3}^{\T}, \bm v_{\r}^{\T}(i|k), \bm 0_{3}^{\T} \ematri^{\T},~i=0,1,\ldots,N_p
	\end{align*}
	where $\bm p_{\r}(i|k)=[x_{\r}(i|k), y_{\r}(i|k), z_{\r}(i|k)]^{\T}$ and $\bm v_{\r}(i|k)\in \mathbb{R}^{3}$ denote the predicted reference position and velocity at the $i$-th  time-step ahead, respectively. Moreover, the reference Euler angles and rates are always set to zero for the stability consideration of the UAV. 
	
		Accordingly, define the predicted relative position and distance as 
		\begin{align*}
		[x(i|k),y(i|k),z(i|k)]^{\T}&=\bm p_\r(i|k) -\bm p_o(i|k)\\
		r(i|k)&=\left(x^2(i|k)+y^2(i|k)\right)^{1/2}.
		\end{align*} 
		By replacing $x(t)$ and $y(t)$ in \eqref{lya_cont} with $x(i|k)$ and $y(i|k)$, the predicted LGV is naturally given below 
		\begin{align}\label{eq11}
		\begin{split}
		&\bm v_L(i+1|k) = \frac{-v_d}{r(i|k)\left(r^\beta(i|k)+r_d^\beta\right) } \\
		&~~~\times \bmatri x(i|k)\left(r^\beta(i|k) -r_d^\beta\right) + 2y(i|k)\sqrt{r^\beta(i|k) r_d^\beta} \\ y(i|k)\left(r^\beta(i|k)-r_d^\beta\right) -2x(i|k)\sqrt{r^\beta(i|k) r_d^\beta} \ematri.
		\end{split}
		\end{align}
	
	Next, we show how to update the required quantities in \eqref{eq11}.  Firstly, it follows from  \eqref{lya_cont} that the reference velocity $\bm v_{\r}(i+1|k)$   is  updated  as  
	\begin{align} \label{eq_gui}
	\bm v_{\r}(i+1|k)= \bmatri \bm v_L(i+1|k)\\ v_z \cdot \tanh\left(z_d - z(i|k) \right)\ematri + \bm v_o(k),
	\end{align}
where $v_z$ is a predefined vertical speed. Note that the standard hyperbolic tangent function $\tanh(\cdot)$ is adopted to ensure the boundedness of $\Vert \bm v_{\r}(i|k) \Vert$, i.e., $$\Vert \bm v_{\r}(i|k) \Vert\le (v_d^2+v_z^2)^{1/2} + \Vert \bm v_o(k) \Vert.$$ If the UAV is faster than the above upper bound, then   $\bm v_{\r}(i|k)$ is feasible for the UAV.  
	
	Secondly, the reference position is initialized as $\bm p_{\r}(-1|k)=\bm \xi(k)$ (cf. Fig.~\ref{figsplanning}) and updated via  a single-integrator. That is 
	\begin{align} \label{eqpath}
	\bm p_{\r}(i+1|k) = \bm p_{\r}(i|k) + \tau \cdot \bm v_{\r}(i+1|k).
	\end{align} 
	Note that we have shifted $\bm v_\r(\cdot|k)$ one step ahead for notational convenience.
	If $\tau$ is sufficiently small,  it follows from \cite[Theorem 1]{Dong2019Flight} that the reference trajectory $\{\bm x_{\r}(i|k)\}_{i=0}^{N_p}$ will guide the UAV to achieve \eqref{obj}. 
 	
	Thirdly, we simply assume that the target velocity is invariant in the trajectory planning horizon and predict the target position as		
	$${\bm p}_o(i+1|k)= {\bm p}_o(i|k) + \tau {\bm v_o(k)}~\text{with}~ {\bm p}_o(-1|k)=\bm p_o(k).$$
	
	Overall, our trajectory planning is fully described in Algorithm \ref{pathplanning}, where  time indices of  $x, y, z, r$ are omitted for brevity.
	
	\begin{algorithm} [!t]
		\caption{Trajectory planning by the LGV.}
		\label{pathplanning}
		\begin{enumerate}\renewcommand{\labelenumi}{\rm(\alph{enumi})}	
			\item \textbf{Input:} the UAV position $\bm \xi(k)$, the target state $[\bm p_o^\T(k),\bm v_o^\T(k)]^\T$, the desired tracking formation vector $[r_d, z_d, v_d]^\T$. 
			\item \textbf{Output:} the reference trajectory $\{ \bm x_{\r}(i|k)\}_{i=0}^{N_p}$.
			\begin{itemize}
				\item \textbf{initialization:} Let $\bm p_{\r}(-1|k)=\bm \xi(k)$, and ${\bm p}_o(-1|k)=\bm p_o(k)$.
				\item \textbf{Update:} For $i=-1,0,1,\ldots,N_p-1$ 
				\begin{itemize}
					\item Compute the relative position and projected distance:
					\begin{align*}
					[x, y, z]^\T &= \bm p_{\r}(i|k) - {\bm p}_o(i|k), \\
					r&=\sqrt{x^2+y^2}.
					\end{align*}
					\item Compute the predicted LGV   by \eqref{lya_cont}, i.e., 
					\begin{align*}
					\hspace{-1.2cm}\bm v_L(i+1|k) = \frac{-v_d}{r(r^\beta+r_d^\beta) } \bmatri  x(r^\beta -r_d^n) + y(2\sqrt{r^\beta r_d^\beta}) \\ y(r^\beta-r_d^\beta) -x(2\sqrt{r^\beta r_d^\beta}) \ematri.
					\end{align*}
					\item Compute the next desired states by \eqref{eq_gui}:
					\begin{align*}
					\bm v_{\r} (i+1|k) &= \bmatri \bm v_L(i+1|k)\\v_z\cdot \tanh\left( z_d-z\right) \ematri + {\bm v_o(k)},\\
					\bm p_{\r}(i+1|k) &= \bm p_{\r}(i|k) + \tau \bm v_{\r} (i+1|k),\\
					{\bm p}_o(i+1|k) &= {\bm p}_o(i|k) + \tau {\bm v_o(k)}.
					\end{align*}
				\end{itemize}
			\end{itemize}	
		\end{enumerate}			
	\end{algorithm}

	\section{The Design of MPCs} \label{sec_controller}
	In this section, we design both nonlinear and linearized MPCs to  track the reference trajectory $\{ \bm x_{\r}(i|k)\}_{i=0}^{N_p}$ in Algorithm \ref{pathplanning}.      
		
	\subsection{The nonlinear MPC}
	Let $\bm x(t) =\text{col}\left(\bm \xi(t), \bm \eta(t), \dot {\bm \xi}(t), \dot {\bm \eta}(t)\right) $. We rewrite the model in \eqref{model} into the following abstract form
	\begin{align} \label{eqcontinous}
	\dot {\bm x}(t) = f(\bm x(t), \bm u(t)),
	\end{align}
	 and use the explicit Euler method to discretize it
	\begin{align} \label{eqdis}
	{\bm x}(k+1) = {\bm x}(k) + \tau\cdot f(\bm x(k), \bm u(k)).
	\end{align}
	Then, our objective is on the control design to track the reference states $\{ \bm x_{\r}(i|k)\}_{i=0}^{N_p}$.  To this end, define the control input sequence as
	$
	\bm u(\cdot|k)=\left\{\bm u(0|k), \bm u(1|k),\ldots,\bm u(N_p-1|k) \right\} 
	$
	and the objective function as follows
	\begin{align*}
	J(\bm u(\cdot|k)) = &\frac{1}{2} \sum_{i=0}^{N_p} \left(\bm x(i|k) -\bm x_{\r}(i|k)\right)^{\T} Q \left(\bm x(i|k) -\bm x_{\r}(i|k) \right) \\
	&+ \frac{1}{2}\sum_{i=0}^{N_p-1}  \left(\bm u(i|k) - \bm u_{\r} \right)^{\T} R \left( \bm u(i|k) - \bm u_{\r}\right) 
	\end{align*}
	where $\bm u_{\r}=u_\r \cdot {\bm 1}_4^\T$ is a constant input vector for the UAV to ensure that $\bm \eta(t)=\dot{\bm \xi}(t)=\ddot{\bm \xi}(t)=0$ in \eqref{model}, leading to $u_\r= m g/(4l_c)$. Moreover, $Q$ and $R$ are positive semi-definite and positive definite weighting matrices, respectively.

	The nonlinear MPC law is obtained by solving the constrained optimization problem per time-step $k$ viz
	\begin{subequations} \label{nmpc}
		\begin{align}
		&\bm u_*(\cdot|k)= \argmin_{ \bm u(\cdot|k)} J(\bm u(\cdot|k))  \\
		&~~~\mathrm{s.t.}~  \bm x(0|k) =  \bm x(k)   \label{initial},  	      \\
		&~~~~~~~~ \bm x(i+1|k) = \bm x(i|k) + \tau f(\bm x(i|k), \bm u(i|k)  )  \label{dyna},  \\
		&~~~~~~~~ \phi(i|k), \theta(i|k) \in [-c, c]  \label{xcons}, \\
		&~~~~~~~~\bm u(i|k)  \in [0, \bar u] \label{cons}, \\
		&~~~~~~~~\bm u(j+1|k) = \bm u(j|k), ~\forall j \ge N_c-1,
		\end{align}
	\end{subequations}
	where \eqref{xcons} represents the  constraint on pitch and roll angles of the UAV, and \eqref{cons} denotes the input constraint. In addition, $N_p$ and $N_c$ denote the tunable prediction and control horizons. To reduce the numerical complexity of \eqref{nmpc}, we omit the terminal  cost and constraint set  \cite{gros2020linear,cimini2020embedded} whose effect diminishes as the horizon length grows.

	\subsection{The linearized MPC}
	Since it is hard to directly solve the NLP in \eqref{nmpc}, one may consider to converting into a quadratic program (QP) by linearizing \eqref{eqdis} at the current state $\bm x(k)$ as 
	\begin{align*} 
	\bm x(k+1) = A_k \bm x(k) + B_k \bm u(k),
	\end{align*}
	where
	\begin{align*}
	A_k = I_{12} +\tau \frac{\partial f( \bm x(k),\bm u(k))}{\partial \bm x}~\text{and}~ B_k =\tau \frac{\partial f( \bm x(k),\bm u(k))}{\partial \bm u}.
	\end{align*}
Then, the nonlinear model in \eqref{dyna} is approximated by the following linearized prediction model
	\bee\label{linear}
	\bm x(i+1|k) = A_k \bm x(i|k) + B_k \bm u(i|k).  
	\ene
Note that other solutions can also be applied to limit the effect of prediction errors,  e.g., linearization around a trajectory or robust approaches \cite{rawlings2017model}.

	By letting $\Delta \bm x(i|k) =  \bm x(i|k) - \bm x_{\r}(i|k)$ and $
	\Delta \bm u(i|k) =  \bm u(i|k) - \bm u_{\r}$, 
	we transform \eqref{linear} into a LIT system, i.e., 
	\begin{equation} \label{error}
	\Delta \bm x(i+1|k) = A_k \Delta \bm x(i|k) + B_k \Delta \bm u(i|k) + \bm r(i+1|k)   
	\end{equation} 
	where $\bm r(i+1|k)$ is a nonzero offset and is given by 
	\begin{align*}
	\bm r(i+1|k) = A_k \bm x_{\r}(i|k) + B_k \bm u_{\r} - \bm x_{\r}(i+1|k).
	\end{align*}
	We define
	\begin{align*}
	\Delta X &= \text{col}\left(\Delta \bm x(1|k),\ldots,\Delta \bm x(N_p|k)\right), \\
	\Delta U &= \text{col}\left(\Delta \bm u(0|k), \ldots, \Delta \bm u(N_c-1|k)\right),
	\end{align*}
	and write \eqref{error} into a collected vector form
	\begin{align}\label{compact}
	\Delta X =  \wt{A}_k \Delta\bm x(0|k) + \wt{B}_k \Delta U + \wt{D}_k,
	\end{align}
	where $\wt{A}_k$ and $\wt{B}_k$ are given in \eqref{wAB}, and $\wt{D}_k$ is obtained by
	\begin{align*}
	\wt{D}_k= \bmatri \bm r(1|k) \\ A_k\bm r(1|k) + \bm r(2|k)\\ \vdots \\A_k^{N_p-1}\bm r(1|k)+A_k^{N_p-2}\bm r(2|k) +\ldots+ \bm r(N_p|k)   \ematri.
	\end{align*}
	\newcounter{mytempeqncnt}
	\begin{figure*}[!t]	
		\normalsize
		\setcounter{mytempeqncnt}{\value{equation}}		
		\begin{align} \label{wAB}
		\wt{A}_k = \bmatri  A_k \\ A_k^2\\ \vdots\\ A_k^{N_p} \ematri, ~~
		\wt{B}_k = \bmatri B_k & 0 & \cdots & 0 \\ 
		A_k B_k & B_k & \ldots & 0 \\
		\vdots&\vdots&\ddots&\vdots \\ 
		A_k^{N_c-1} B_k & A_k^{N_c-2} B_k &\cdots & 	 B_k\\
		A_k^{N_c} B_k & A_k^{N_c-1} B_k &\cdots & 	A_k B_k +B_k\\ 
		\vdots&\vdots&\ddots&\vdots \\
		A_k^{N_p-1} B_k & A_k^{N_p-2} B_k &\cdots &A_k^{N_p-N_c} B_k+\cdots+ A_k B_k +B_k\ematri.
		\end{align}
		\hrulefill	
	\end{figure*}
	Letting $C \bm x(i|k) \le \bm c$ represent the state constraint in \eqref{xcons} and $\text{blkdiag}(\cdot)$ be a block diagonal matrix, the state constraint is rewritten as 
	\begin{align} \label{eq_cc}
	\wt{C}  \wt{B}_k \Delta U \le 
	\wt{\bm c} -\wt{C} \wt{A}_k \Delta\bm x(0|k)-\wt{C} \wt{D}_k,
	\end{align}
	where $\wt{C} = \text{blkdiag}\left(C, \ldots, C\right)$ and   
	\begin{align*}
	\wt{\bm c} = \text{col}\left( \bm c, \ldots, \bm c \right) - \wt{C}  
	\text{col}\left( \bm x_{\r}(1|k), \ldots, \bm x_{\r}(N_p|k) \right).
	\end{align*}
	
	By \eqref{compact} and \eqref{eq_cc}, the NLP in \eqref{nmpc} is converted into the following QP problem
	\begin{subequations} \label{qp}
		\begin{align}
		&\Delta U_*  =  \argmin_{ \Delta U} \frac{1}{2} \Delta U^{\T} H \Delta U + \bm h^{\T} \Delta U \label{qpobj}\\
		&~~~\mathrm{s.t.}~~  W \Delta U \le \bm w. \label{eq_cons} 
		\end{align}
	\end{subequations}
	Here the new weighting matrix $H$ and vector $\bm h$ are given by 
	\begin{align*} 
	H = \wt{B}_k ^{\T} \wt{Q}  \wt{B}_k +\wt{R}, ~\bm h = \wt{B}_k^\T \wt{Q} ^\T \left(\wt{A}_k \Delta\bm x(0|k) + \wt{D}_k\right),
	\end{align*}
	where  
	$
	\wt{Q} = \text{blkdiag}(Q,\ldots,Q),~\wt{R} = \text{blkdiag}(R,\ldots,R)$  
	and the matrix $W$ and vector $\bm w$ in constraint \eqref{eq_cons}  are given by
	\begin{align*}
	W = \bmatri \wt{C} \wt{B}_k\\ I_4\\ -I_4 \ematri, ~~
	\bm w = \bmatri \wt{\bm c} -\wt{C} \wt{A}_k \Delta \bm x(0|k)-\wt{C} \wt{D}_k\\
	(\bar{u}-u_{\r})\bm{1}_{4}\\
	u_{\r}\bm{1}_{4} \ematri.
	\end{align*}
		\subsection{Numerical comparison between the nonlinear and linearized MPCs in MATLAB} \label{numberical}
		\begin{table*}[t!]
		\centering
		\caption{Comparison between the nonlinear and linearized MPCs on Intel Core i5-6500 CPU@$3.2$\si{GHz}.}
		\begin{tabular}{cc:cc:cc:cc:cc}
			\toprule[1pt]
			\multicolumn{2}{c:}{Horizon length} & \multicolumn{2}{c:}{Solution time (s)} & \multicolumn{2}{c:}{Range error (m)} & \multicolumn{2}{c:}{Speed error (m/s)} & \multicolumn{2}{c}{Height error (m)} \\
		 \multicolumn{1}{c}{$N_p$}   & \multicolumn{1}{c:}{$N_c$}     & \multicolumn{1}{c}{NMPC} & \multicolumn{1}{c:}{LMPC} & \multicolumn{1}{c}{NMPC} & \multicolumn{1}{c:}{LMPC} & \multicolumn{1}{c}{NMPC} & \multicolumn{1}{c:}{LMPC} & \multicolumn{1}{c}{NMPC} & \multicolumn{1}{c}{LMPC} \\
			\hline
			$10$    & $5$     & $0.0903$ & $0.0067$   & $0.1144$ & $0.0420$ & $0.0076$ & $0.0101$ & $0.0013$ & $0.0091$ \\
			\hline
			$10$    & $10$    & $0.1712$ & ${0.0070}$   & $0.1115$ & $0.0449$ & ${\bf 0.0075}$ & $0.0099$ & $0.0012$ & $0.0093$ \\
			\hline
			${\bf 20}$    & ${\bf 10}$    & $0.3778$ & $ 0.0089$   & $0.0165$ & $0.0209$ & $0.0114$ & $0.0250$ & $1.71\times10^{-4}$ & $0.0090$ \\
			\hline
		 ${20}$    & ${20}$    & ${0.6481}$ & ${0.0100}$  & $0.0180$ & $0.0264$ & $0.0111$ & $0.0242$ & $\bm {1.28\times10^{-4}}$ & $0.0090$ \\
			\hline
			$40$    & $20$    & $2.1852$ & $0.0116$  & ${\bf 0.0124}$ & $0.0534$ & $0.0177$ & $0.0258$ & $5.21\times10^{-4}$ & $0.0090$ \\
			\hline
			$40$    & $40$    & $2.6402$ & $0.0145$ & $0.0127$ & $0.0573$ & $0.0174$ & $0.0235$ & $5.27\times10^{-4}$ & $0.0090$ \\	
			\bottomrule[1pt]
		\end{tabular}%
		\label{tabl}%
	\end{table*}%
	
	For comparing the nonlinear MPC in \eqref{nmpc} with its linearized counterpart in \eqref{qp}, let $Q=\diag(\bm 1_9,\bm 0_3)$, $R=0.1\cdot I_4$, $c=\pi/4$, $\bar u = 12$ and the standoff tracking pattern is specified by  $r_d = 2.0$\si{m}, $z_d=5.0$\si{m}, $v_d = 1.0$\si{m/s}.  In this simulation, they are solved by directly invoking  \texttt{nlmpcmove}   and  \texttt{quadprog}   in MATLAB 2020a on the personal computer with an Intel Core i5-6500 CPU@$3.2$\si{GHz}.
	The steady-state tracking results are reported in Table \ref{tabl}. Though the nonlinear MPC indeed outperforms the linearized one,  their steady-state tracking performances are not significantly different for the horizon length up to $40$ ($=4$ seconds).  We only report the transition results in Figs. 12-16 for their DNN-based versions for the purpose of saving space. 
 	
	\begin{table*}[t!]
		\centering
		\caption{Solution time comparison between the nonlinear, linearized and DNN-based MPCs with $N_p=20$, $N_c=10$, $\tau=0.1$\si{s}.}
		\begin{tabular}{c:ccc:ccc:c}
			\toprule[1pt]
			\multicolumn{1}{c:}{Platform}  & \multicolumn{3}{c:}{Intel Core i5@$3.2$\si{GHz}} & \multicolumn{3}{c:}{ARM@$200$\si{MHz}} & \multicolumn{1}{c}{FPGA@$200$\si{MHz}} \\
			\hline
			{Methods}       
			& {NMPC} & {LMPC} & {DNN-based MPC} & {NMPC} & {LMPC} & {DNN--based MPC} & {DNN--based MPC}  \\
			\hline
			{Solution time (s)} & $0.3778$  & $0.0089$ & ${\bf 2.81\times 10^{-5}}$ & $120.9$  & $2.8$ & ${\bf 1.03\times10^{-4}}$ & ${\bf 1.26\times10^{-4}}$  \\
			\bottomrule[1pt]	
		\end{tabular}%
		\label{tab2}%
	\end{table*}%
	
	\section{The DNN-based MPC with an IM} \label{sec5}
		
	\begin{figure}[!t]
		\centering{\includegraphics[width=0.8\linewidth]{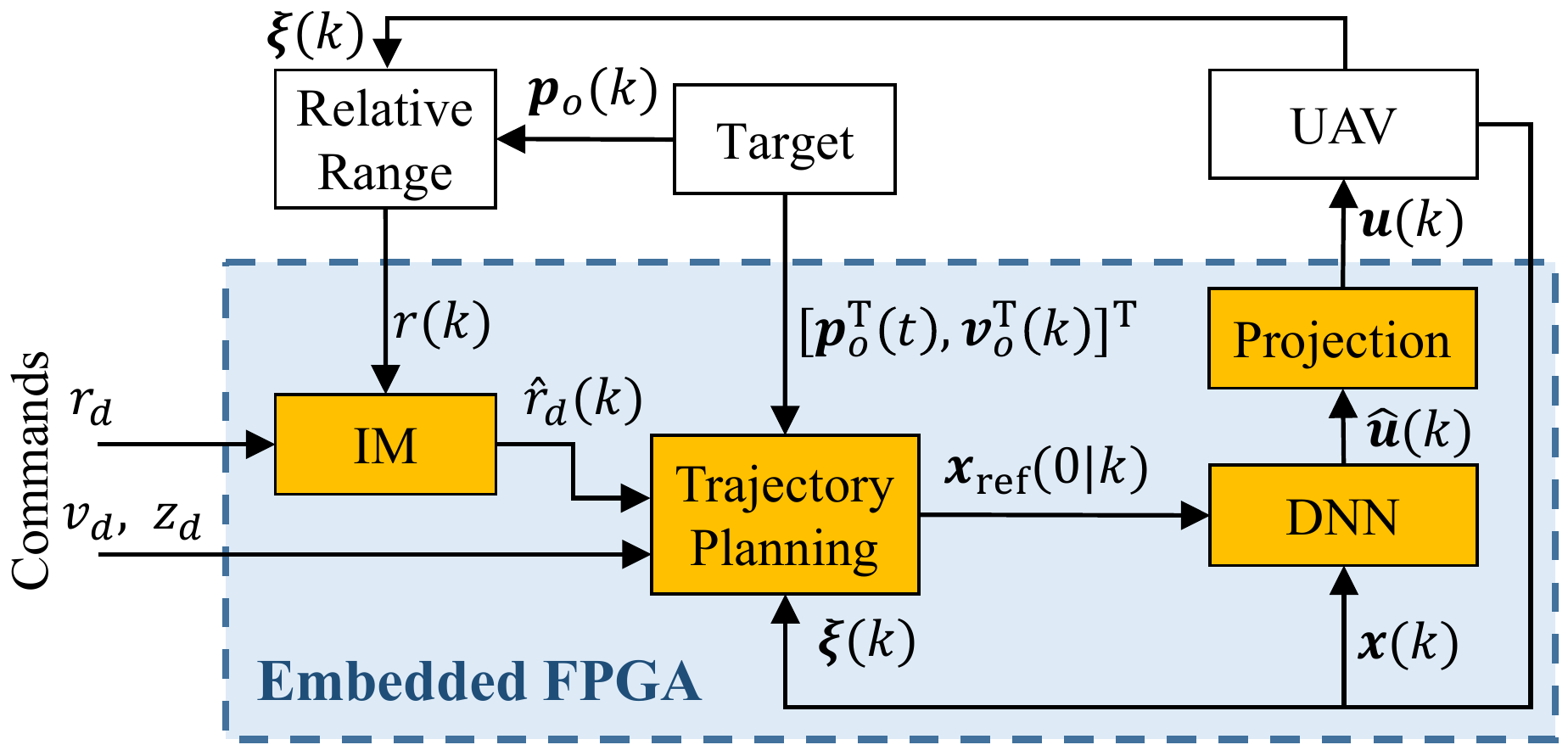}}
		\caption{The standoff tracking system closed by the embedded trajectory planning and DNN-based MPC with an IM on an FPGA.}
		\label{figpinn}
	\end{figure}

Although the solution efficiency of the QP in \eqref{qp} is significantly improved over the NLP in \eqref{nmpc},  it is still challenging for an embedded processor, e.g., ARM or FPGA. For example,  it follows from \cite{domahidi2012efficient} and Eq.~(9) in \cite{hartley2013predictive} that it requires at least $2.5\times 93.58=233.95$\si{ms} for an ARM@$200$\si{MHz} to solve an  QP in \eqref{qp} with $\mathrm{dim}(\bm x)=12$, $\mathrm{dim}(\bm u)=4$, and $N_p=N_c=20$, which is impossible to complete within a sampling period of $0.1$\si{s}. In fact, Table \ref{tab2} also implies that the MPC law is hard to solve on an ARM@$200$\si{MHz} by Eq. (9) in \cite{hartley2013predictive}. Thus, the DNN-based MPC seems to be the only feasible option to solve \eqref{nmpc} or \eqref{qp} on such a resource-limited embedded platform.

In this section, we design a novel DNN-based MPC with an integral module (IM) for the  embedded implementation via an  FPGA and obtain a standoff tracking system shown in Fig.~\ref{figpinn}. In specific, we use the supervised learning to train a DNN-based policy for the implementation of MPC on an embedded FPGA. Then,   an integral module (IM) is designed to refine the tracking performance of the DNN-based MPC.  

	\subsection{The offline sample collection and DNN training}

Clearly, the performance of the DNN-based MPC depends heavily on the quality of training samples. To promote ``richness" of our training samples, we only collect the first limited number of samples per randomized initial state in the closed-loop simulations of Section \ref{numberical}. On one hand, if we only use the open-loop solution, the resulting control input usually takes value in the boundary of its feasible set, e.g. $0$ or $\bar u$ of \eqref{cons}, which intuitively is not informatively rich. On the other hand, the value of the control input in the closed-loop system usually tends to be stationary, and thus the samples in a relatively large time-step are not informatively rich as well. In this work, such a number is determined by the observation on the input fluctuations of the closed-loop system.

In particular, we firstly randomize the state vector $\bm x(k)$ in \eqref{initial} (or \eqref{eq_cons})  and  solve the associated NLP (or QP)  for the nonlinear (or linearized) MPC.  The first element $\bm u_*(k)$ of the optimal solution is added via the pair $\{\bm s(k), \bm u_*(k)\}$ to the training dataset $\mD$, where $\bm s(k) \in \mathbb{R}^{16}$ consists of the UAV state $\bm x(k)$ and the first reference state $\bm x_\r(0|k)$, i.e.,   
\begin{align*}
\bm s(k) = \text{col}\left(\bm x(k), \bm p_{\r}(0|k), \bm v_{\r}(0|k)\right).
\end{align*} 
To reduce the dimension of $\bm s(k)$, we remove zero elements of $\bm x_{\r}(0|k)$.  Then, we keep sampling for a limited number of time steps under this randomized state vector and start over again with a newly randomized  state vector.

	\begin{figure}[!t]
		\centering{\includegraphics[width=0.9\linewidth]{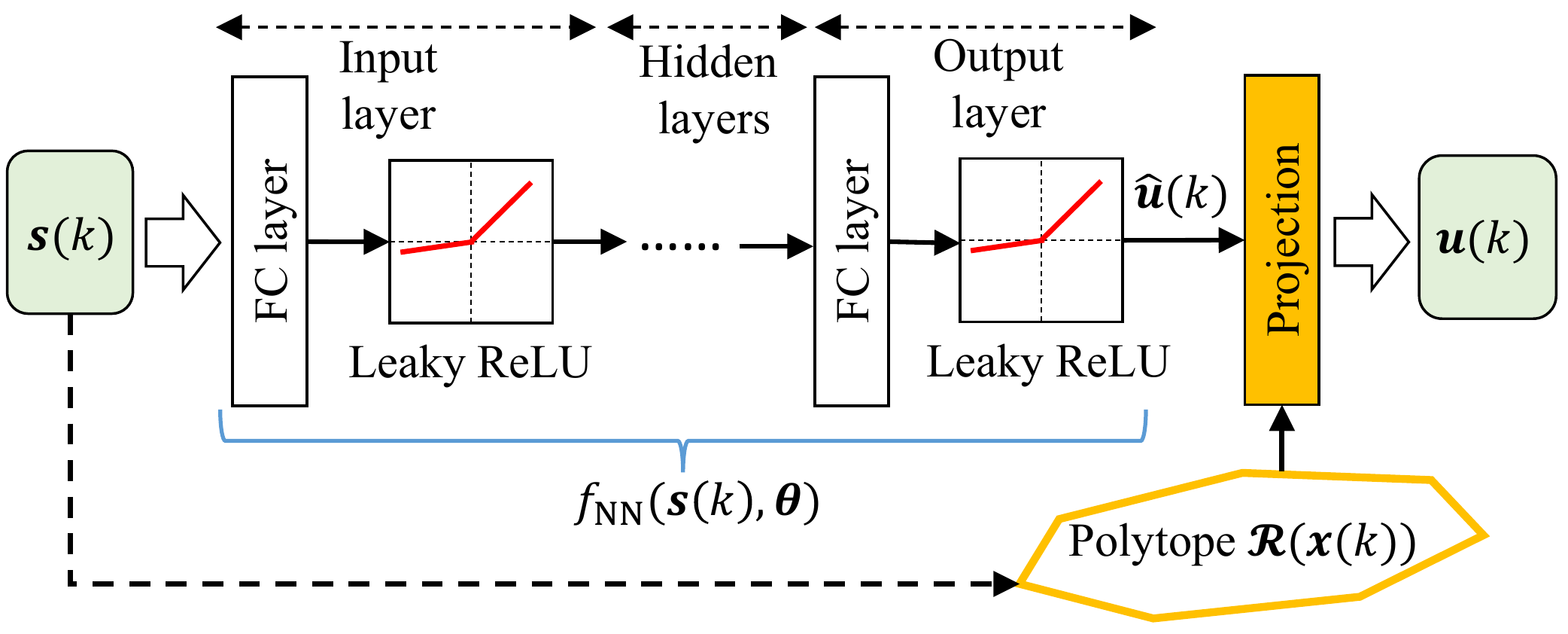}}
		\caption{The DNN architecture and the projection onto $\mR(\bm x(k))$ in \eqref{eqpro}.}
		\label{figNN}
	\end{figure}
	
	We adopt a fully-connected DNN $f_{\NN}(\bm s(k),\bm \theta)$ in Fig.~\ref{figNN}  to fit our samples where $\bm s(k)$ and $\widehat{\bm u}(k)$ are the input and output of the network, and $\bm \theta$ denotes the parameters to be trained, and the leaky rectified linear unit (ReLU) is selected as the nonlinear activation function. That is, the DNN is trained to fit the training data
	\begin{align} \label{train}
	\min\limits_{\bm \theta} \frac{1}{N_\mD} \sum_{i=1}^{N_\mD}\Vert f_\NN(\bm s(i), \bm \theta) - \bm u_*(i) \Vert^2,   
	\end{align}
	where $N_\mD$ is the size of training dataset $\mD$. We adopt the Adam stochastic gradient descent method \cite{kingma2014adam} to solve \eqref{train} with PyTorch, and denote the resulting optimal solution by $\bm \theta_*$.

 Note that both the sample collection and the DNN training are performed {\em offline}. 
 
\subsection{The online projection and integral module to reduce tracking errors} 
To satisfy the state constraint in \eqref{xcons} and the input limit in \eqref{cons}, we further project the DNN output onto the following polytope of the feasible set 
	\begin{align} \label{eqpro}
	\mR (\bm x(k)) = \left\{\bm u \big| C B_k \bm u \le \bm c - C A_k \bm x(k), \bm u\in[0,\bar u]\right\}. 
	\end{align}
In this work, we implicitly assume that $\mR(\bm x(k))$ is non-empty, the recursive feasibility of which has been extensively studied in the MPC theory \cite{rawlings2017model}. 
Since $\mR(\bm x(k))$ in \eqref{eqpro} is the intersection of a finite number of half-spaces,  $\bm u(k)$ can be obtained via a relatively simple QP, i.e.,
\begin{align} \label{qp1}
&\bm u(k) = \text{argmin}_{\bm u} \Vert f_\NN(\bm s(k), \bm \theta_*) - \bm u \Vert^2 \\
&~~~~\mathrm{s.t.}~~  \bm u  \in \mR (\bm x(k)). \notag
\end{align}
By \cite{chen2018approximating}, the projection $\text{PROJ}(\bm u, j)$ of $\bm u$ onto the $j$-th halfspace $C_j B_k \bm u \le c - C_j A_k \bm x(k)$ has a closed form solution, where $C_j$ denotes the $j$-th row of $C$ and $c$ is given in \eqref{xcons}. That is 
\begin{align*}
\text{PROJ}(\bm u, j) = \begin{cases}
\bm u, ~~~~~~~~~~~~\text{if}~C_j B_k \bm u \le  c - C_j A_k \bm x(k),\\
\bm u + \frac{\left(c - C_j (A_k \bm x(k) + B_k \bm u)\right)B_k^\T C_j^\T}{\Vert C_j B_k \Vert^2 }, ~\text{otherwise}.
\end{cases}
\end{align*}
Thus, the QP in \eqref{qp1} can be easily solved by alternating projections. As the  DNN output is ``close" to an optimal solution of the NLP (or QP), any infeasible DNN output is also expected close to the boundary of $\mR (\bm x(k))$, leading to that the computational time of solving \eqref{qp1}   is practically negligible. Instead, we can also simply use a feasible point in $\mathcal{R}(x_k)$ as a backup. 	
	
	Clearly, the approximation error of DNN depends on its size, e.g., width and depth. Since the hardware resources of the embedded chip limit the network scale, the approximation error is inevitable in implementation. In this subsection, we further introduce an IM to refine the tracking performance of the DNN-based MPC. Specifically, the constant command $r_d$ in Algorithm \ref{pathplanning} is replaced by the new time-varying command 
	 \begin{align} \label{eqim}
		\widehat{r}_d(k)= r_d - c_1 \sigma(k),~
		\sigma(k)=\sum_{i=0}^{k} \sat \left( \frac{r(i)-r_d}{c_2}\right),
		\end{align}
		where  $c_i$, $i\in\{1,2\}$ are positive constants and $\sat(\cdot)$ is the standard saturation function \cite{dong2019Target}, i.e.,
		\begin{align*}
		\sat(r)= \begin{cases}
		1, & \text{if}~ r>1,\\
		r, & \text{if}~ \vert r\vert \le 1, \\
		-1, & \text{if}~ r<-1.
		\end{cases}
		\end{align*}
		
		 Unfortunately, it is impossible to provide a rigorous analysis of its impact on the stability due to the high nonlinearity and the inclusion of the DNN in the closed-loop system. We can only informally explain the function of \eqref{eqim}. Suppose that the DNN-based MPC returns a constant steady-state tracking error $b$, i.e., 
	$r(k) \approx r_d + b.$ Then, the localized motion of the closed-loop system in Fig.~\ref{figNN} is approximately given below
		\begin{align} \label{eq_dy}
		r(k+1) = \widehat{r}_d(k) + b.
		\end{align}

		\begin{lemma} \label{lemma3}
			Consider the closed-loop system in \eqref{eq_dy} with a constant bias $b$. If $0<c_2< c_1<2 c_2$, then $r(k)$ asymptotically converges to $r_d$. 
		\end{lemma}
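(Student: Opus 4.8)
The plan is to analyze the scalar recursion obtained by combining \eqref{eq_dy} with \eqref{eqim}. Substituting $\widehat r_d(k) = r_d - c_1\sigma(k)$ into \eqref{eq_dy} gives $r(k+1) = r_d + b - c_1\sigma(k)$, and similarly $r(k) = r_d + b - c_1\sigma(k-1)$ for $k\ge 1$. Subtracting yields $r(k+1) - r(k) = -c_1\bigl(\sigma(k)-\sigma(k-1)\bigr) = -c_1\,\sat\!\bigl((r(k)-r_d)/c_2\bigr)$. So the closed-loop dynamics reduce to a one-dimensional recursion in the tracking error $e(k) := r(k) - r_d$:
\begin{align} \label{eqerr}
e(k+1) = e(k) - c_1\,\sat\!\left(\frac{e(k)}{c_2}\right).
\end{align}
The goal becomes showing $e(k)\to 0$ under the hypothesis $0 < c_2 < c_1 < 2c_2$, and then $r(k)\to r_d$ follows immediately.

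First I would treat the saturated (outer) region $|e(k)| > c_2$. There $\sat(e(k)/c_2) = \sgn(e(k))$, so $e(k+1) = e(k) - c_1\sgn(e(k))$, i.e. the error moves toward zero by a fixed step of size $c_1$. Since $c_1 < 2c_2$, one step from just outside the band cannot overshoot past $-c_2$ on the far side by more than one more step's worth, and a short monotonicity-and-boundedness argument shows that after finitely many steps the trajectory enters the linear band $|e(k)| \le c_2$ (and, once the step would carry it out the other side, it lands within distance $c_1 < 2c_2$ of the origin, hence within the band or close to it). The clean way to organize this: show $|e(k)|$ is strictly decreasing by at least $\min\{c_1, \text{something}\}$ while outside the band, so the band is reached in finitely many steps; and once inside, check the band is (forward) invariant or at worst the trajectory re-enters it.

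Next, in the linear region $|e(k)| \le c_2$, we have $\sat(e(k)/c_2) = e(k)/c_2$, so \eqref{eqerr} becomes the linear recursion $e(k+1) = \bigl(1 - c_1/c_2\bigr)e(k)$. The contraction factor is $\rho := 1 - c_1/c_2$, and the hypothesis $c_2 < c_1 < 2c_2$ gives exactly $-1 < \rho < 0$, so $|\rho| < 1$ and $e(k)\to 0$ geometrically (with oscillating sign). I would also note that $|\rho|<1$ means $|e(k+1)| < |e(k)| \le c_2$, so the band is indeed forward-invariant, which closes the gap with the previous paragraph. Combining the two regimes: from any initial condition the trajectory reaches the band in finite time and then converges to $0$, hence $r(k)\to r_d$.

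The main obstacle is the careful bookkeeping at the interface between the two regions — specifically, ruling out a pathological limit cycle that straddles the band boundary. A step taken from inside the band near its edge, or a step from outside that lands just inside, must be shown not to produce a periodic orbit. The condition $c_1 < 2c_2$ is precisely what prevents the linear map from being an expansion (which would otherwise bounce points back out of the band with growing amplitude), and $c_1 > c_2$ (equivalently $\rho < 0$) forces the sign-alternating behavior that, together with $|\rho|<1$, guarantees decay rather than a $2$-cycle; I would make this explicit by checking that the only fixed point of \eqref{eqerr} is $e=0$ and that $|e(k+1)| \le \max\{|\rho|\,|e(k)|,\ |e(k)| - c_1\} < |e(k)|$ for all $e(k)\neq 0$, which is the Lyapunov-type inequality (using $V(e)=|e|$ or $V(e)=e^2$) that clinches asymptotic convergence via a discrete analogue of \cite[Theorem 4.2]{Khalil2002Nonlinear}.
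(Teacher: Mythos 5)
Your proposal is correct and takes essentially the same route as the paper's proof: reduce the closed loop to the error recursion $\delta(k+1)=\delta(k)-c_1\,\sat(\delta(k)/c_2)$, drive the error into the linear band by fixed-size steps in the saturated region, and conclude inside the band from the contraction factor $1-c_1/c_2\in(-1,0)$ via a quadratic (or absolute-value) Lyapunov argument. If anything, your bookkeeping at the band boundary (using $c_1<2c_2$ to exclude overshoot and checking forward invariance of the band) is more explicit than the paper's brief treatment.
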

		\begin{proof}
			Substituting \eqref{eqim} into \eqref{eq_dy} leads to that
			\begin{align} \label{eq_r}
			r(k+1) &= r_d - c_1\sigma(k) + b \nonumber\\
			&= \underbrace{r_d - c_1 \sigma(k-1) + b}_{=~r(k)} - c_1 \sat \left( \frac{r(k)-r_d}{c_2}\right) .
			\end{align}
			Let $\delta(k) = r(k)-r_d$. It follows from \eqref{eq_r} that 
			\begin{align}\label{eq_delta}
			\delta(k+1) = \delta(k) - c_1 \sat \left({\delta(k)}/{c_2}\right).
			\end{align} 
			
			If $\delta(k) \ge c_2$, then $\sat(\delta(k)/c_2)=1$ and $\delta(k+1) = \delta(k) - c_1<0$, i.e., $\delta(k)$ decreases to $c_2$. Similarly, if $\delta(k) \le -c_2$, then $\delta(k)$ increases to $-c_2$.
			
			If $\delta(k)\in(-c_2, c_2)$, then $\sat(\delta(k)/ c_2) = \delta(k)/ c_2$ and $\delta(k+1)=(1-c_1/c_2)\delta(k)$. Consider a Lyapunov function candidate $L_3(k) = \delta^2(k)$. Taking its time difference along with \eqref{eq_delta} leads to that 
			\begin{align*}
			\Delta L_3(k) = L_3(k+1)- L_3(k) 
			= -\frac{c_1}{c_2}\left(2-\frac{c_1}{c_2}\right)\delta(k) 
			\le 0.  
			\end{align*}
			By \cite[Theorem 4.2]{Khalil2002Nonlinear}, then $\delta(k)$ asymptotically converges to zero, i.e., $\lim_{k\rightarrow\infty} r(k) = r_d. $
		\end{proof}
		
By Lemma \ref{lemma3}, the IM in \eqref{eqim} aims to eliminate the steady-state range error $b$, which is also verified in Fig.~\ref{figPRS} of Section \ref{sec6e}. However, as the PID, \eqref{eqim} may result in the integral windup when $\delta(k_0)$ is very large.  In practice, one can either adopt \eqref{eqim} only near $r_d$ or select a sufficiently small $c_2$.

	\subsection{An easily accessible framework for the FPGA implementation}
	\begin{figure}[!t]
		\centering{\includegraphics[width=1.0\linewidth]{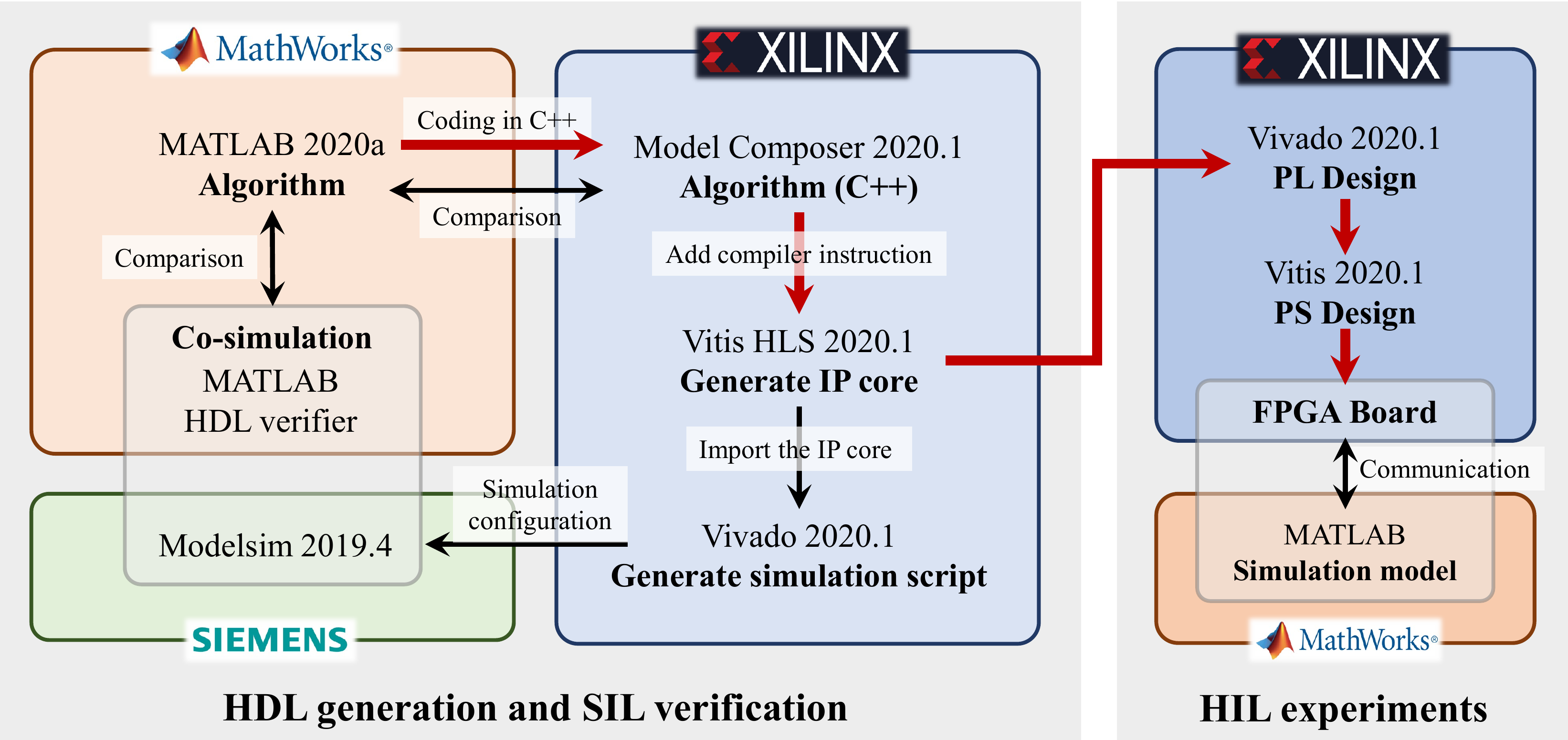}}
		\caption{The HDL generation and FPGA implementation workflow.}
		\label{figWorkflow}
	\end{figure}
	
	The FPGA is good at the parallel computing, and can dramatically accelerate the evaluation of a DNN \cite{nadales2022efficient}. In this work, we introduce an easily accessible framework to show how to  implement our tracking algorithm on an FPGA. The details are given  in Fig. \ref{figWorkflow} with the hardware description language (HDL). The  high-level synthesis (HLS) tool is used to convert the HL programming language for the efficient HDL code directly.  The workflow of the HDL generation is described as follows. 
	\begin{enumerate}
		\item[(a)] Design the tracking algorithms and test them in MATLAB. In this work, they include the trajectory planning in Algorithm \ref{pathplanning}, the DNN in \eqref{train}, the IM in \eqref{eqim}, and the projection in \eqref{qp1}.
		\item[(b)] Code the algorithms of Step (a) in C++ language manually, and then embed into a simulation model through the Xilinx Model Composer to confirm that the C++ code is correct.
		\item[(c)] Use the Vitis HLS tool to generate an IP core described in HDL from the C++ code with the compilable instructions, such as port settings, control protocol, pipeline, loop unwinding, etc. 
		\item[(d)] Generate an IP core in \texttt{verilog} codes, which are then embedded into the HDL simulator (e.g. Modelsim) for correctness checking. 
		\item[(e)] Generate the MATLAB-Modelsim co-simulation module for software-in-the-loop (SIL) simulation.
	\end{enumerate}
	
	The versions of the used softwares are mentioned specifically in Fig.~\ref{figWorkflow}. Following  Steps (a)-(e), we can confirm that the IP core correctly represents our tracking algorithms. Note that the C++ code designed in Step (b) can be directly implemented onto an embedded ARM processor for performance validation.  Once the IP cores are consistent with our Matlab algorithms, we embed them into an FPGA to complete our HIL simulations.

	\section{HIL Simulations} \label{sec6}
	
	In this section, we perform HIL simulations to validate the effectiveness of the proposed DNN-based MPC with an IM on an FPGA@200MHz.

	\subsection{The simulation setting} \label{sec6_3}
		
	\begin{figure}[!t]
		\centering{\includegraphics[width=0.8\linewidth]{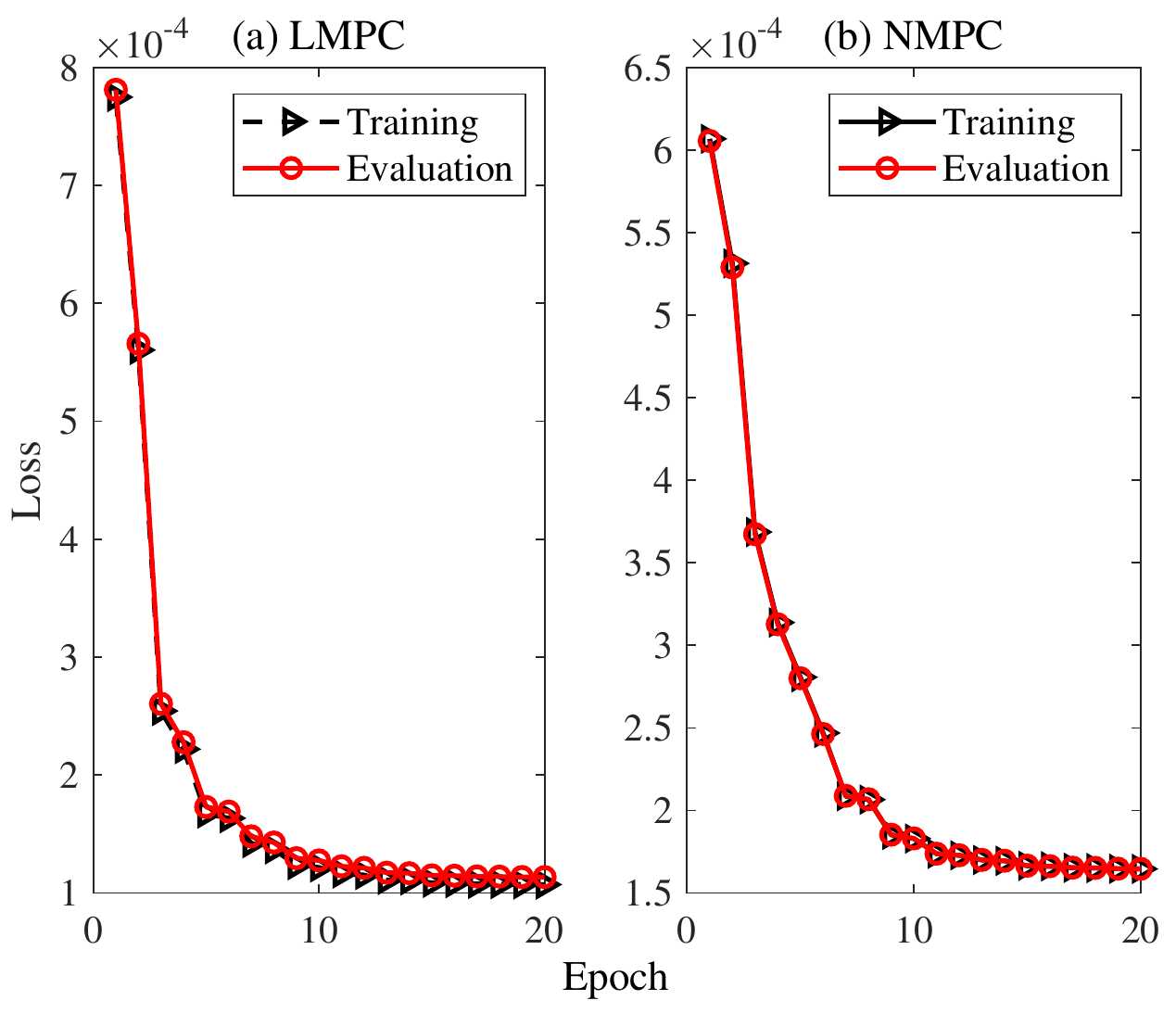}}
		\caption{The mean losses during the training and evaluation process: (a) DNN-based LMPC and (b) DNN-based NMPC.}
		\label{figloss}
	\end{figure}
	We design a fully-connected DNN with $2$ hidden layers and $100$ neurons per layer.  The leaky ReLU is used with a small slope of $0.01$ as activation functions.
	We compare the DNN-based nonlinear and linearized MPCs on the same FPGA with $N_p=20$ and $N_c=10$.  Though the offline training and online implementation of a DNN are the same for both cases, the computational cost of generating samples in the nonlinear one is much more expensive. Specifically, we have generated $10^7$  samples offline for each case, meaning that the number of optimization problems required to solve in each case is up to $10^7$, which takes $27$ hours for the linearized MPC and (estimated) $43.7$ days for the nonlinear MPC  in Section \ref{numberical}. Thus, we  adopt a workstation with an AMD EPYC 7742 CPU having $64$ cores  to generate samples for the nonlinear MPC which takes 205 hours ($\approx 9$ days).  In fact, we have tried the sample size of $5\times 10^6$ and unfortunately, such a sample size is not very satisfactory. Though it is beyond the scope of this work, how to increase the sample efficiency is worthy of further investigation.

	Then, we pick $90\%$ of the samples for training and the rest for evaluation.  Two DNNs are trained for $20$ epochs to generate the corresponding DNN-based NMPC and DNN-based LMPC and Fig.~\ref{figloss} shows the process of training and evaluation by using the Adam \cite{kingma2014adam} with a batch size of $200$. Since the fitting losses reduce marginally after the $15$-th epoch, we regard the parameter after the $20$-th epoch as the optimal solution of \eqref{train}. The training takes around $2.4$ hours on our personal computer.
	
	\subsection{The HIL simulations on an FPGA@$200$\si{MHz} } \label{sec_c}
	
	\begin{figure}[!t]
		\centering{\includegraphics[width=0.6\linewidth]{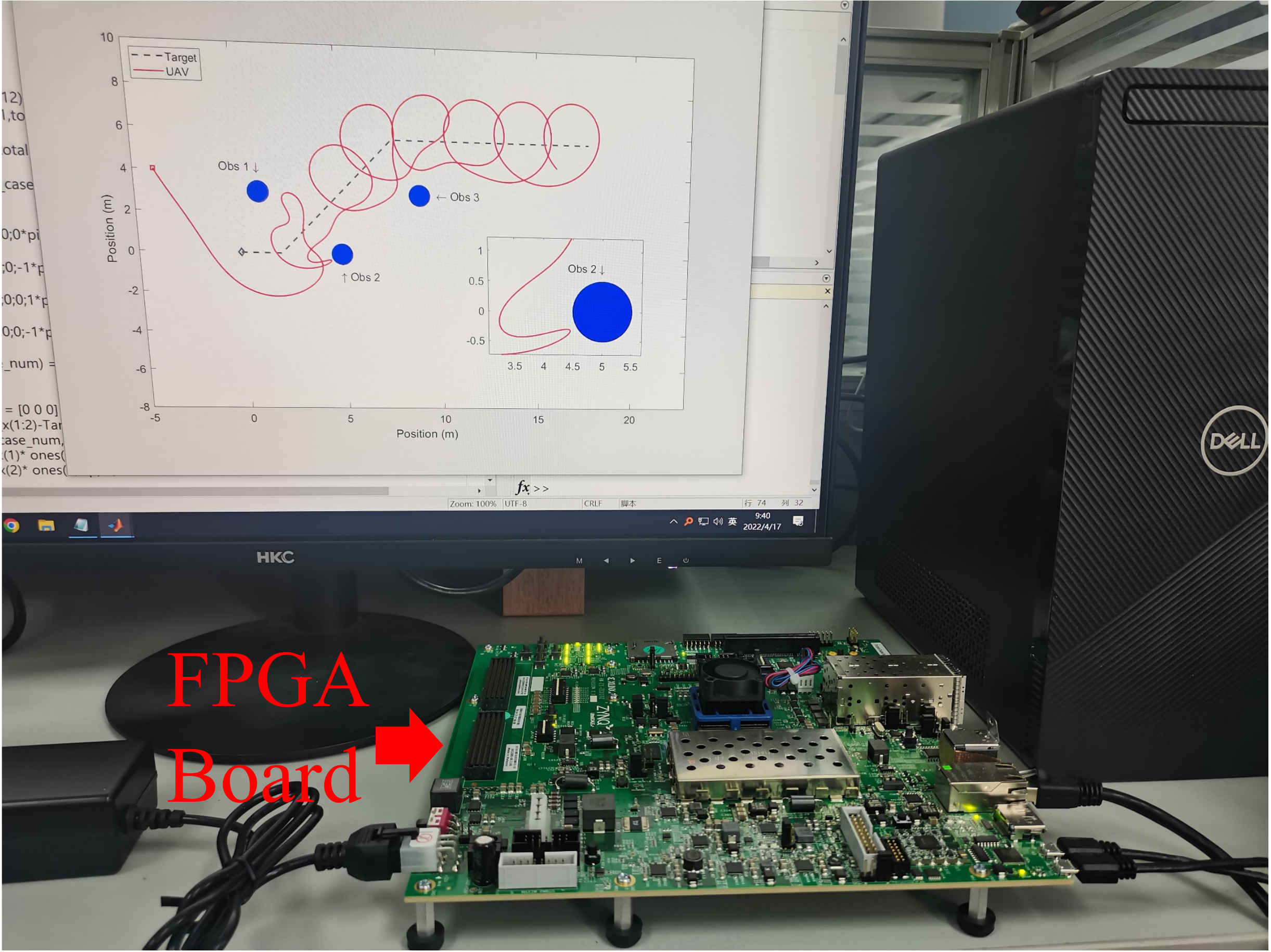}}
		\caption{The Xilinx ZCU102 evaluation board.}
		\label{figfpga}
	\end{figure}
	
	\begin{figure}[!t]
		\centering{\includegraphics[width=0.8\linewidth]{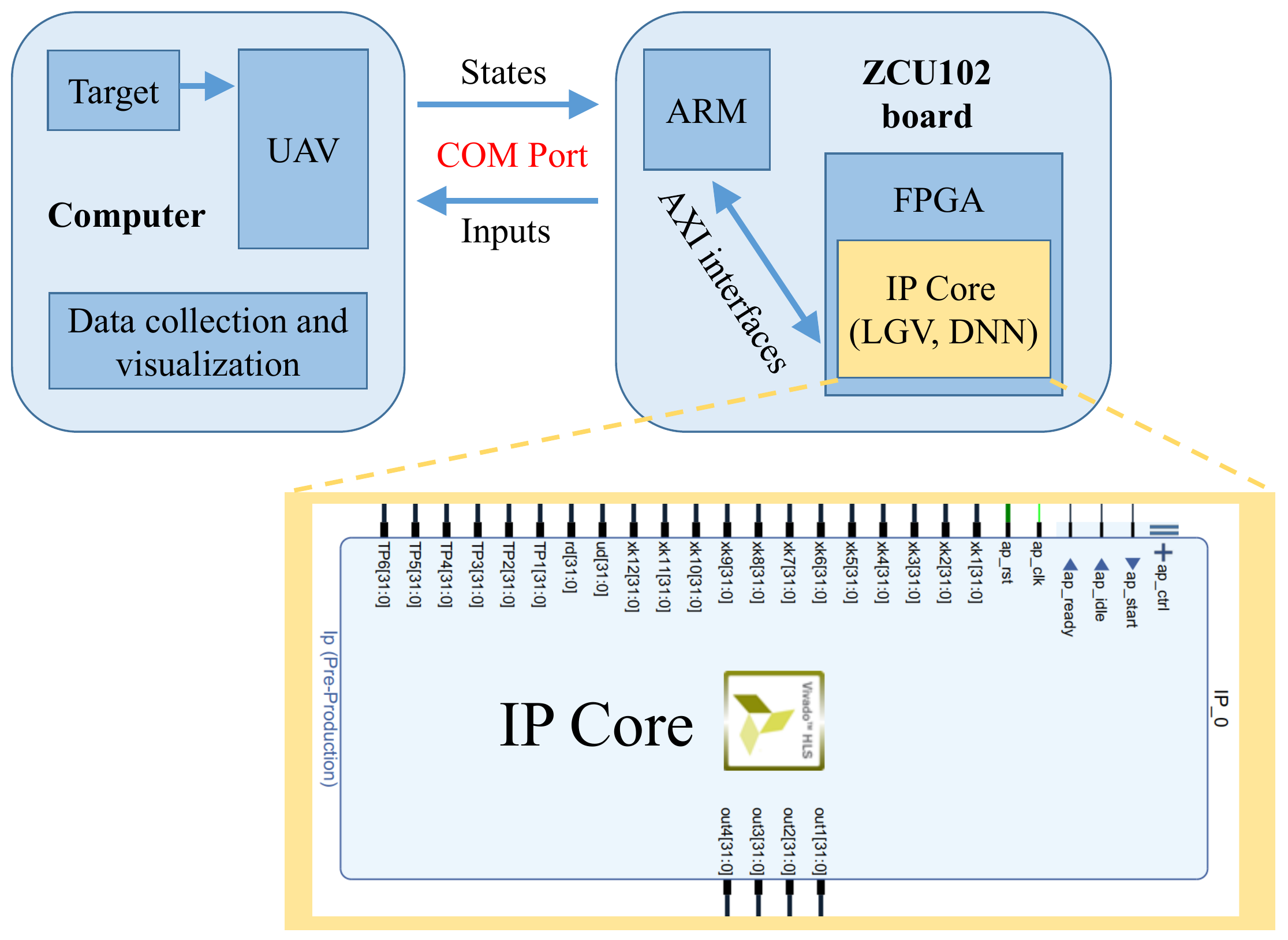}}
		\caption{Schematic diagram of the HIL simulations.}
		\label{figImp}
	\end{figure}
		
	FPGAs can dramatically accelerate the evaluation of DNN by its advantage in parallel processing. We introduce a framework in Fig.~\ref{figWorkflow} to conveniently implement our DNN-based MPCs on an FPGA. Here we adopt the Xilinx ZCU102 as our embedded platform, which contains a Zynq UltraScale+ XCZU9EG-2FFVB1156E MPSoC. The schematic diagram of the HIL simulation is illustrated by Fig.~\ref{figImp}, where the dynamics of the target and UAV are run on the personal computer while the LGV and DNN-based MPC are executed on the FPGA. The computer communicates with the FPGA through the COM port. Besides, an ARM Cortex-A53 processor is adopted to exchange the data between them through AXI interfaces. The digital hardware (HW)  requirements of the implementation are summarized in Table \ref{tab_hw}, which shows that our method requires $9$ percent at most of the HW resources on FPGA. 
		
	Since the DNN-based NMPC and DNN-based NMPC have the same network size in this work, we use the name of DNN-based MPC to represent them for brevity. We also test the evaluation latency on the Intel Core i5 CPU, ARM, and FPGA and record the results in Table~\ref{tab2}. Interestingly, the latency on the FPGA@$200$\si{MHz} is as low as $1.26\times10^{-4}$\si{s}, showing its advantage of parallel computing for the DNN evaluation.  Next, we examine the tracking performances of the DNN-based MPC schemes with three standoff tracking examples.

	\begin{table}[t!]
		\centering
		\caption{A summary of the FPGA implementation}
		\begin{tabular}{lcccc}
			\toprule[1pt]
			Resource  & BRAM  & DSP   & FF    & LUT \\
			\hline
			Available  & 1824  & 2520   & 548160  & 274080 \\
			\hline
			Utilization & 57    & 121    & 27751  & 23167 \\
			\hline
			Utilization (\%) & 3.13  & 4.80   & 5.06  & 8.45 \\
			\bottomrule[1pt]
		\end{tabular}%
		\label{tab_hw}%
	\end{table}%

	\subsubsection{Standoff tracking of a stationary target}\label{sec6e}
	
	\begin{figure}[!t]
		\centering{\includegraphics[width=0.8\linewidth]{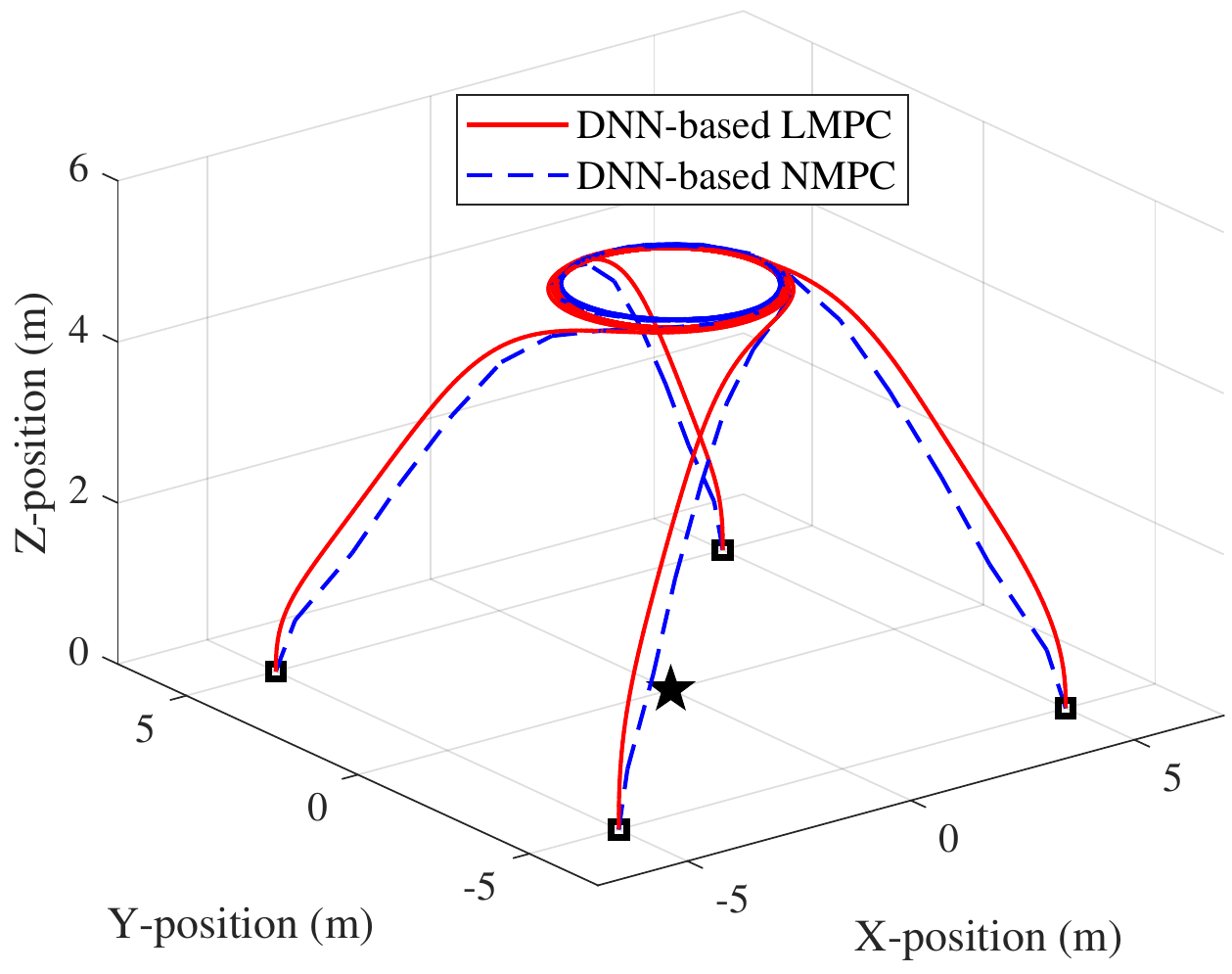}}
		\caption{3D trajectories of the UAV under DNN-based MPCs.}
		\label{fig_tra3}
	\end{figure}
	
	\begin{figure}[!t]
		\centering{\includegraphics[width=0.8\linewidth]{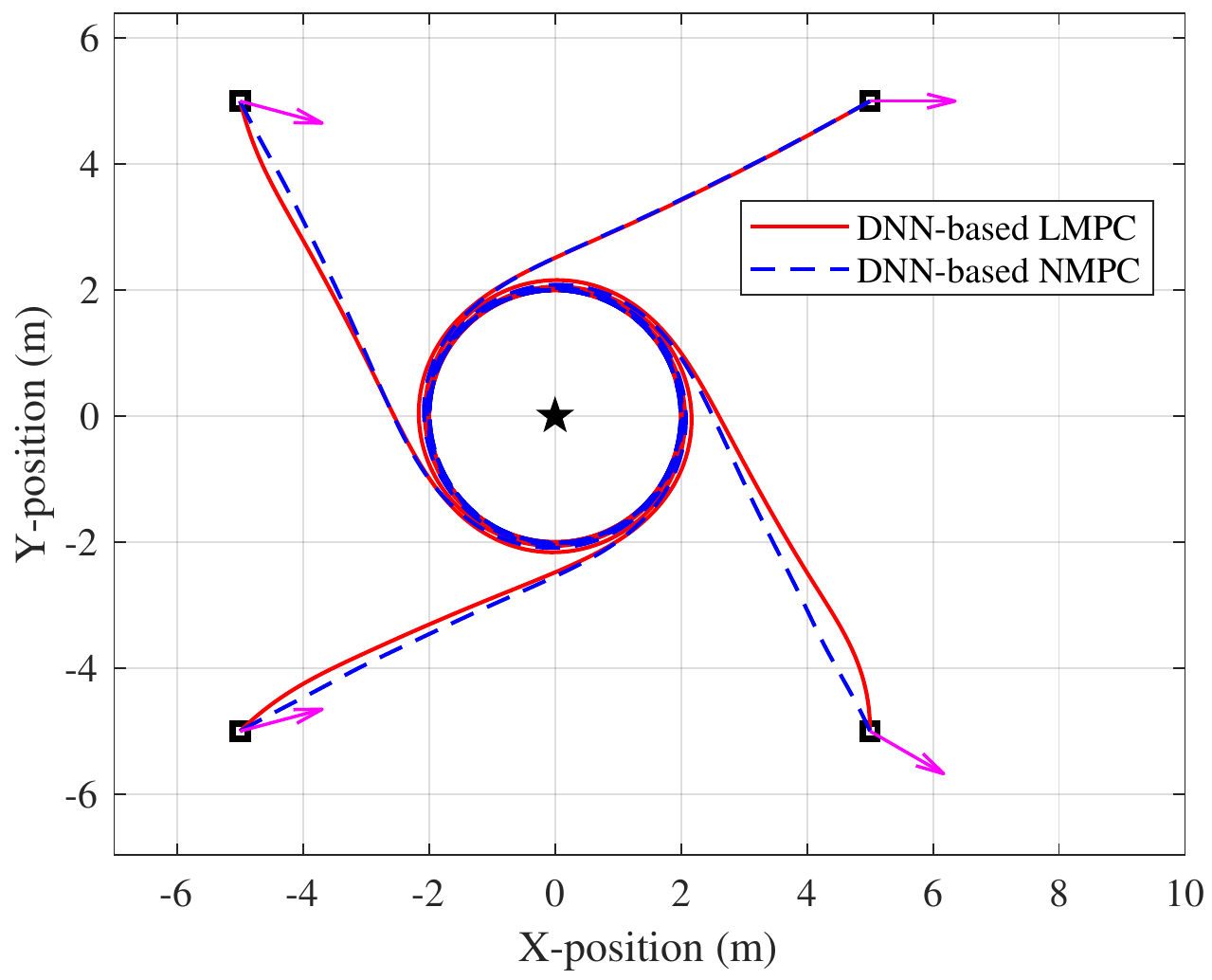}}
		\caption{2D trajectories of the UAV where  squares represent its initial positions $[x_q(t_0),y_q(t_0)]^\T$ and  arrows represent its initial yaw angles $\psi(t_0)$.}
		\label{fig_tra2}
	\end{figure}

	\begin{figure}[!t]
		\centering{\includegraphics[width=0.8\linewidth]{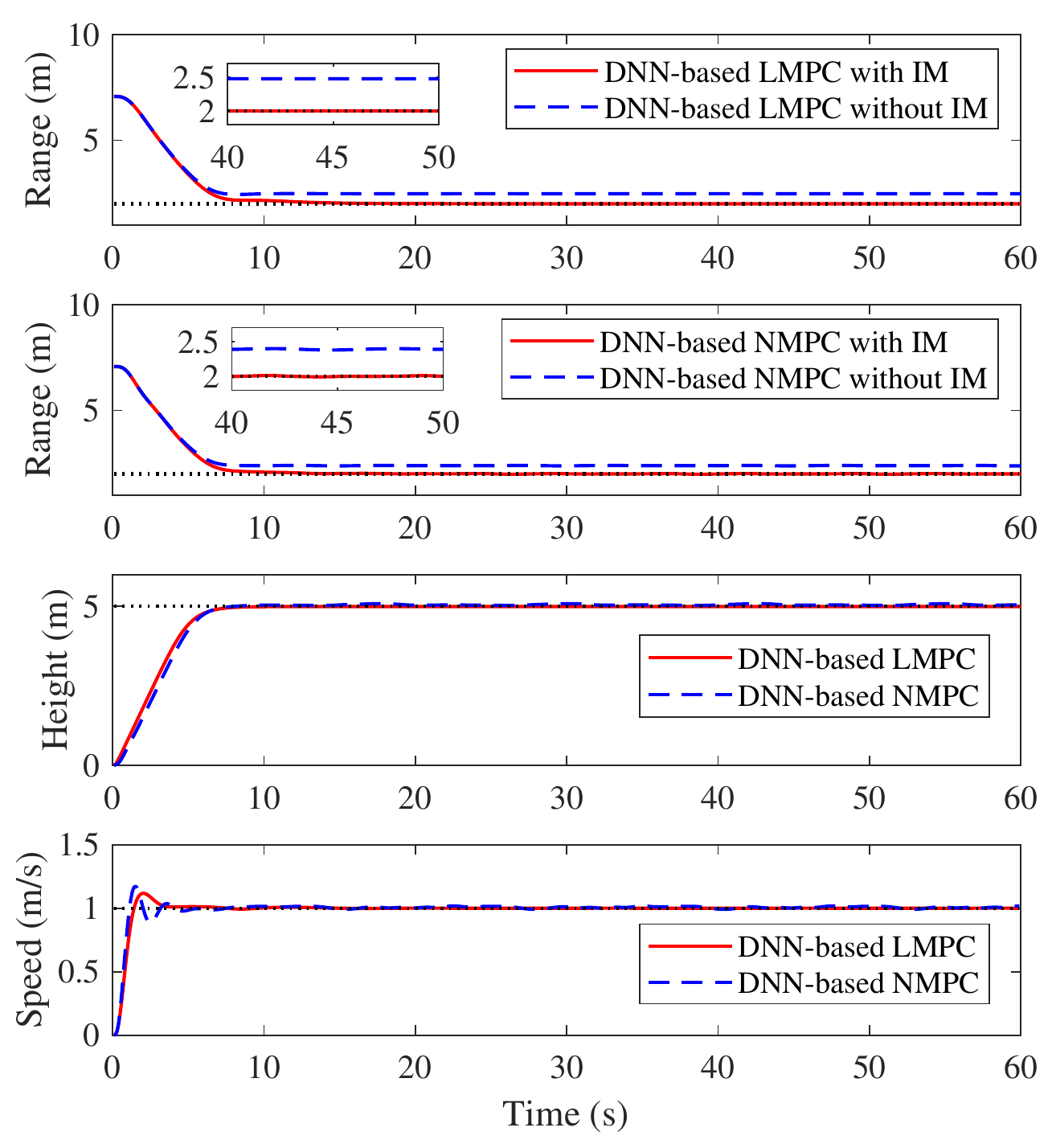}}
		\caption{Relative range, height, and linear speed versus time.}
		\label{figPRS}
	\end{figure}

	\begin{figure}[!t]
		\centering{\includegraphics[width=0.8\linewidth]{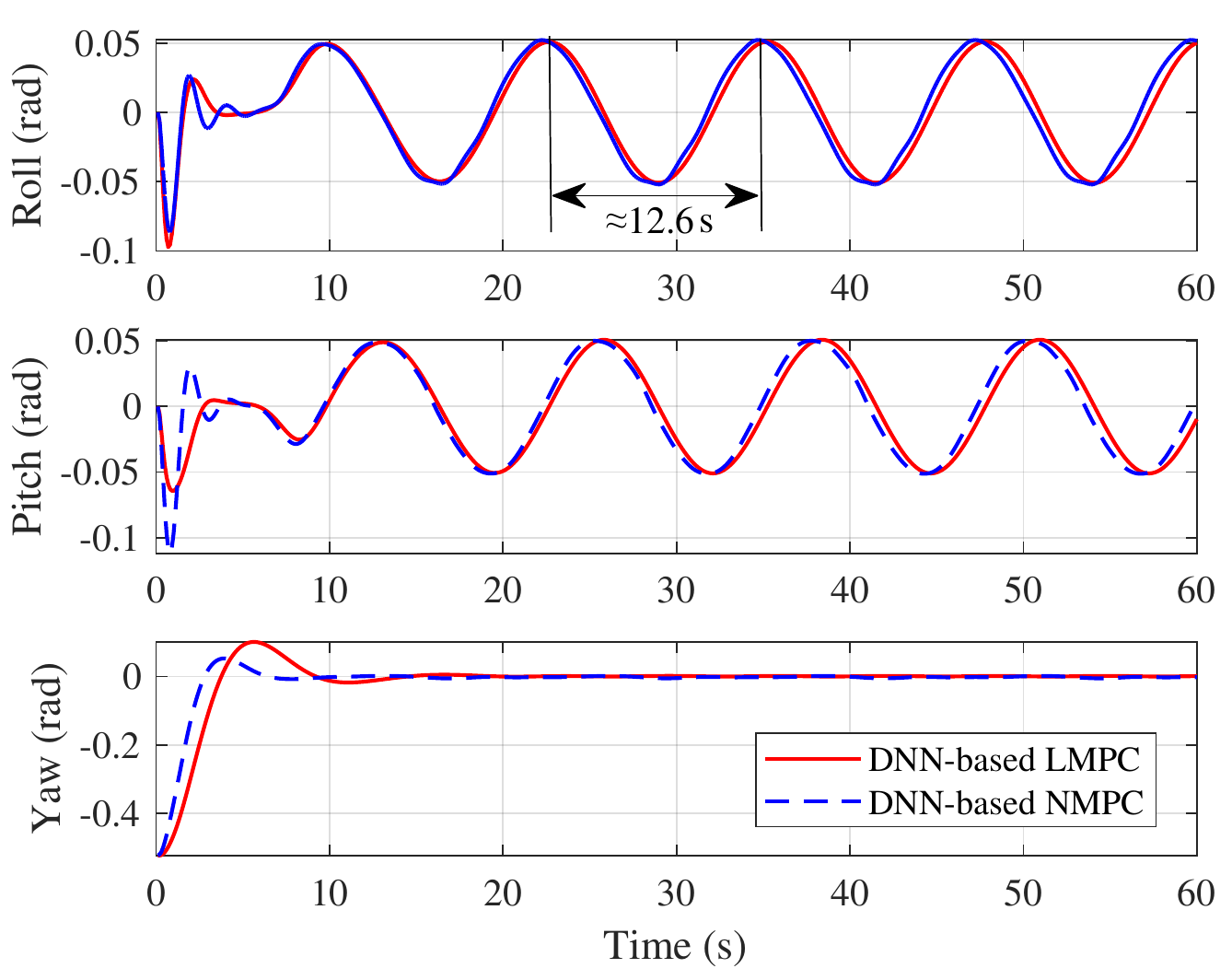}}
		\caption{Euler angles of the UAV versus time.}
		\label{figORI}
	\end{figure}
	
	\begin{figure}[!t]
		\centering{\includegraphics[width=0.8\linewidth]{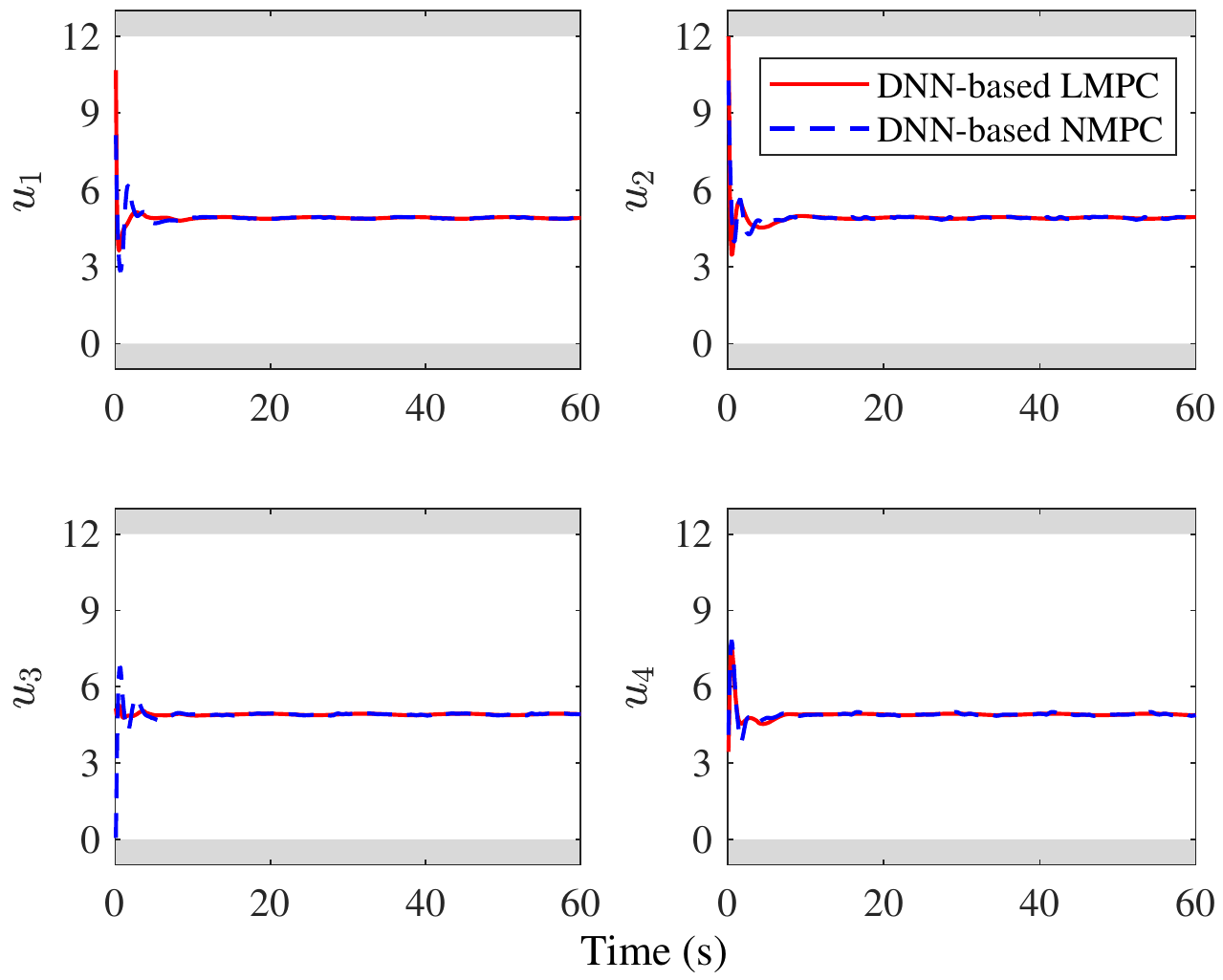}}
		\caption{Control inputs to the UAV versus time.}
		\label{figinput}
	\end{figure}
	
	Fixing the target at the origin, we select four different initial states for the UAV, e.g., $[x_q(t_0),y_q(t_0),\psi(t_0)]^\T=[5,5,0]^\T,[-5,5,-\pi/12]^\T,[-5,-5,\pi/12]^\T,[5,-5,-\pi/6]^\T$, and set $c_1=1.1 \cdot c_2=0.2$ for the IM in \eqref{eqim}.  One can observe from Figs.~\ref{fig_tra3} and \ref{fig_tra2} that all trajectories converge to the desired standoff tracking. Fig.~\ref{figPRS} further verifies that the objective in \eqref{obj} is achieved from the initial state $[5,-5,-\pi/6]^\T$ for both DNN-based MPC schemes
	where the desired values are marked with dotted lines.
	
	By the partially enlarged view of Fig.~\ref{figPRS}, a significant steady-state range error occurs if we remove the IM in \eqref{eqim}, showing its capability in refining the tracking performance.  By Fig.~\ref{figORI}, the yaw angle converges to zero in a short time from $\psi(t_0)=-\pi/6$. However, the overshoot of the linearized one is slightly greater than that of the nonlinear counterpart. Moreover, the UAV needs to persistently adjust its roll and pitch angles at the period of $\approx 12.6$\si{s} to maintain the standoff tracking, which is roughly equal to the theoretical value of $\frac{2\pi r_d}{v_d}=4\pi$. Furthermore, Figs.~\ref{figORI} and \ref{figinput} reveal that the DNN-based MPC is able to satisfy the constraints \eqref{xcons} and \eqref{cons}. Note that the iteration number is select as $3$ for solving \eqref{qp1}, and the projection time via the function $\text{PROJ}(\bm u, j)$ takes up to $6.17\upmu$\si{s} on the FPGA@$200$\si{MHz}. This time has been included in the total computation time of $0.126$\si{ms} in Table \ref{tab2}.
	
	From these results, we can easily observe that  the nonlinear version indeed outperforms  the linearized one in terms of tracking performance,  albeit not  significantly.  Note that the computational cost of generating samples in the nonlinear one is much more expensive.  
	
	\subsubsection{Standoff tracking of a moving target}
	\begin{figure}[!t]
		\centering{\includegraphics[width=0.8\linewidth]{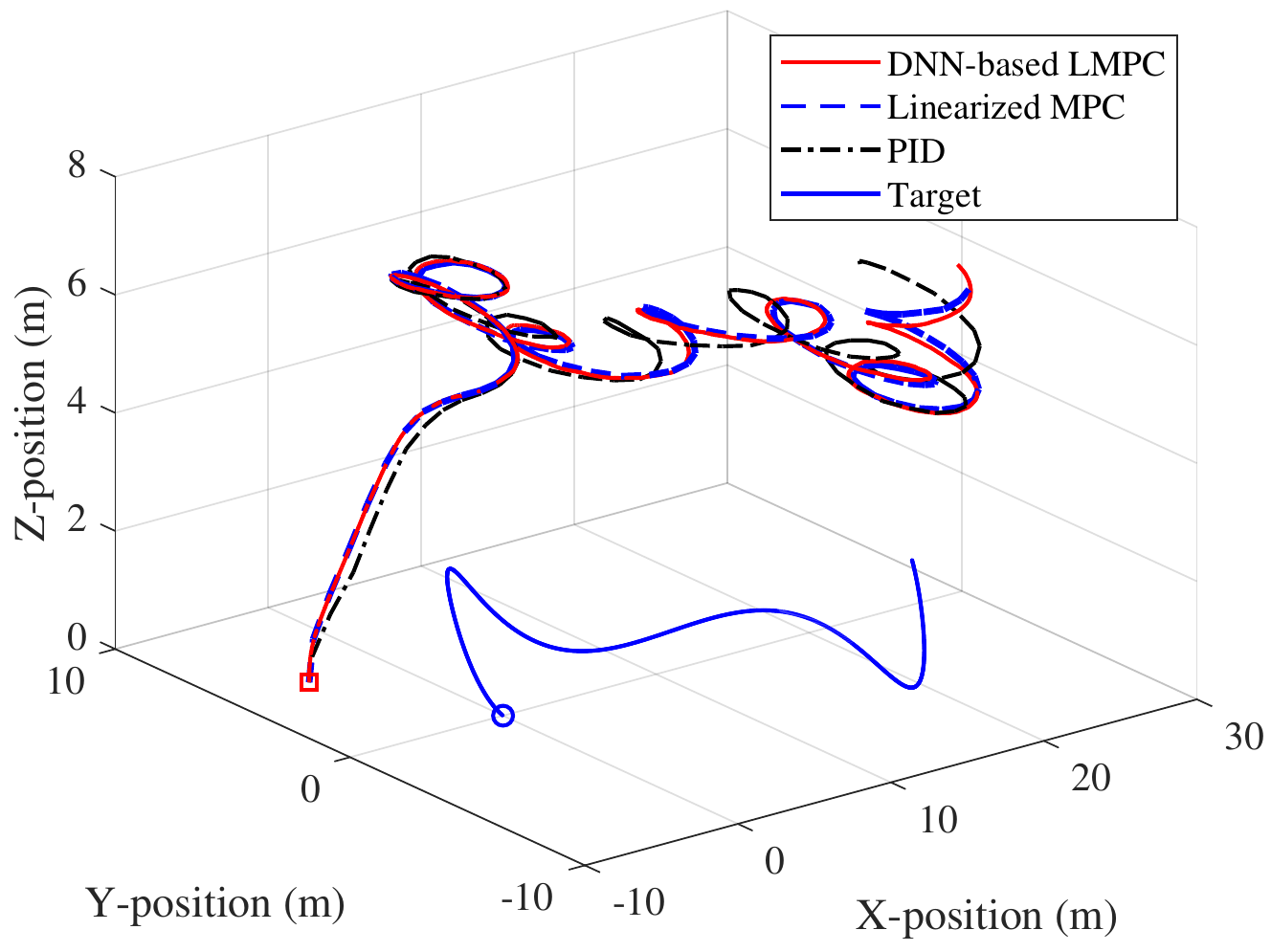}}
		\caption{3D trajectories of the UAV and target.}
		\label{fig3D}
	\end{figure}
	
	\begin{figure}[!t]
		\centering{\includegraphics[width=0.8\linewidth]{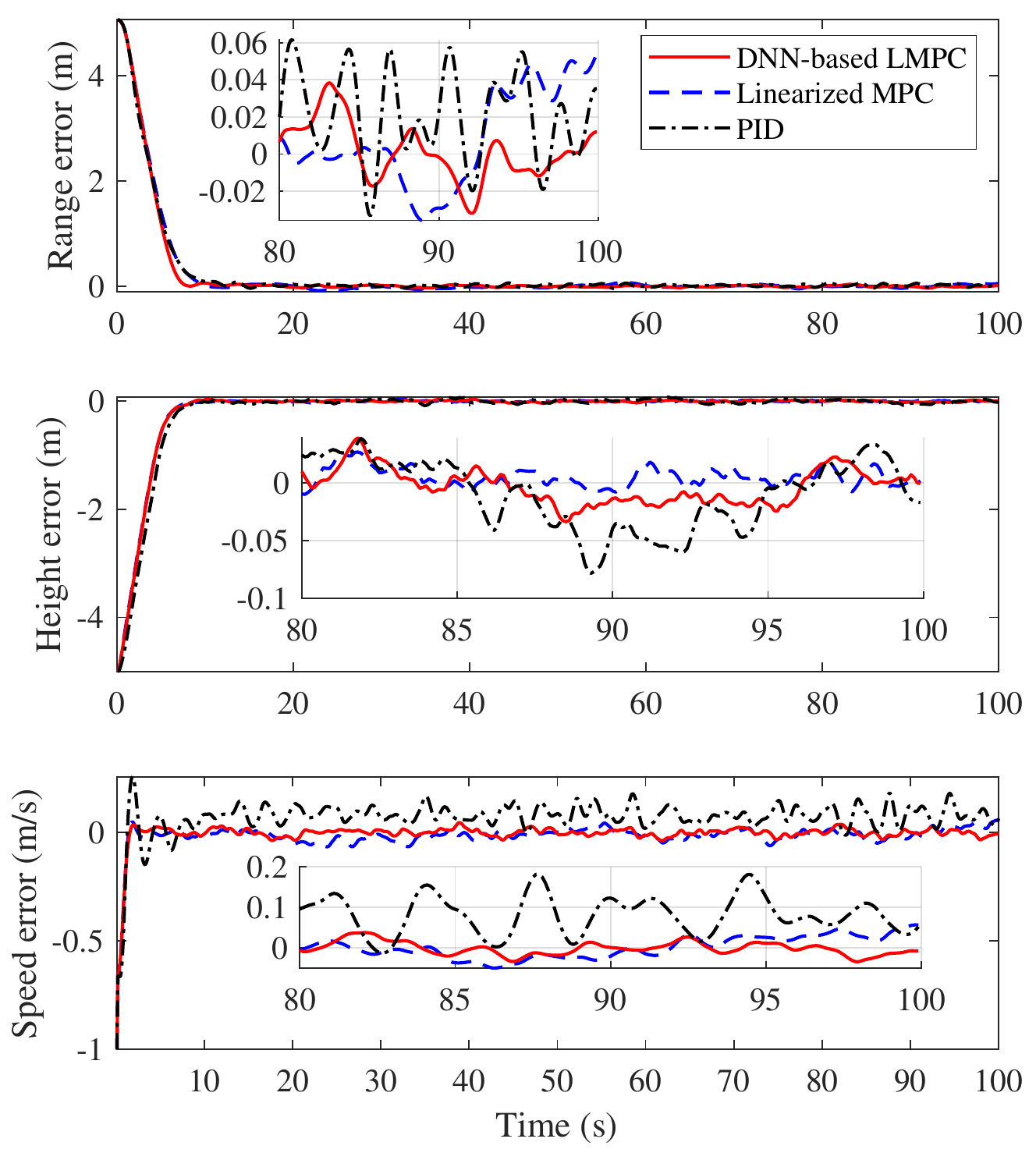}}
		\caption{Tracking errors of the DNN-based LMPC, linearized MPC, and PID.}
		\label{figRHS}
	\end{figure}
	
	Now, we consider the moving target  in Fig.~\ref{fig3D} where its velocity is given as $$\bm v_o(t)=[0.3\sin(0.01t+\pi/6), 0.3\cos(0.06t), 0.1\sin(0.15t)]^\T$$ and add noises to the control input, e.g.,
	\begin{align*}
	\bm u(k) = f_\NN(\bm s(k), \bm \theta_*) + \bm \omega(k),
	\end{align*}
	where $\bm \omega(k)$ is the white Gaussian noise with zero mean and variance of $0.1\cdot I_4$. Moreover, we compare the DNN-based linearized MPC (running on the FPGA) with \eqref{qp} and the PID control \cite{luukkonen2011modelling,gati2013open} (running on the computer). From Fig.~\ref{fig3D}, the tracking results of both the DNN-based MPC and the linearized MPC come close to each other and outperform that of the PID. 
Interestingly,  the DNN-based MPC even has a smaller steady-state range error compared with the linearized MPC thanks to the IM in \eqref{eqim}.  	
\subsubsection{Standoff tracking of a moving target in the presence of obstacles}
	\begin{figure}[!t]
	\centering{\includegraphics[width=0.8\linewidth]{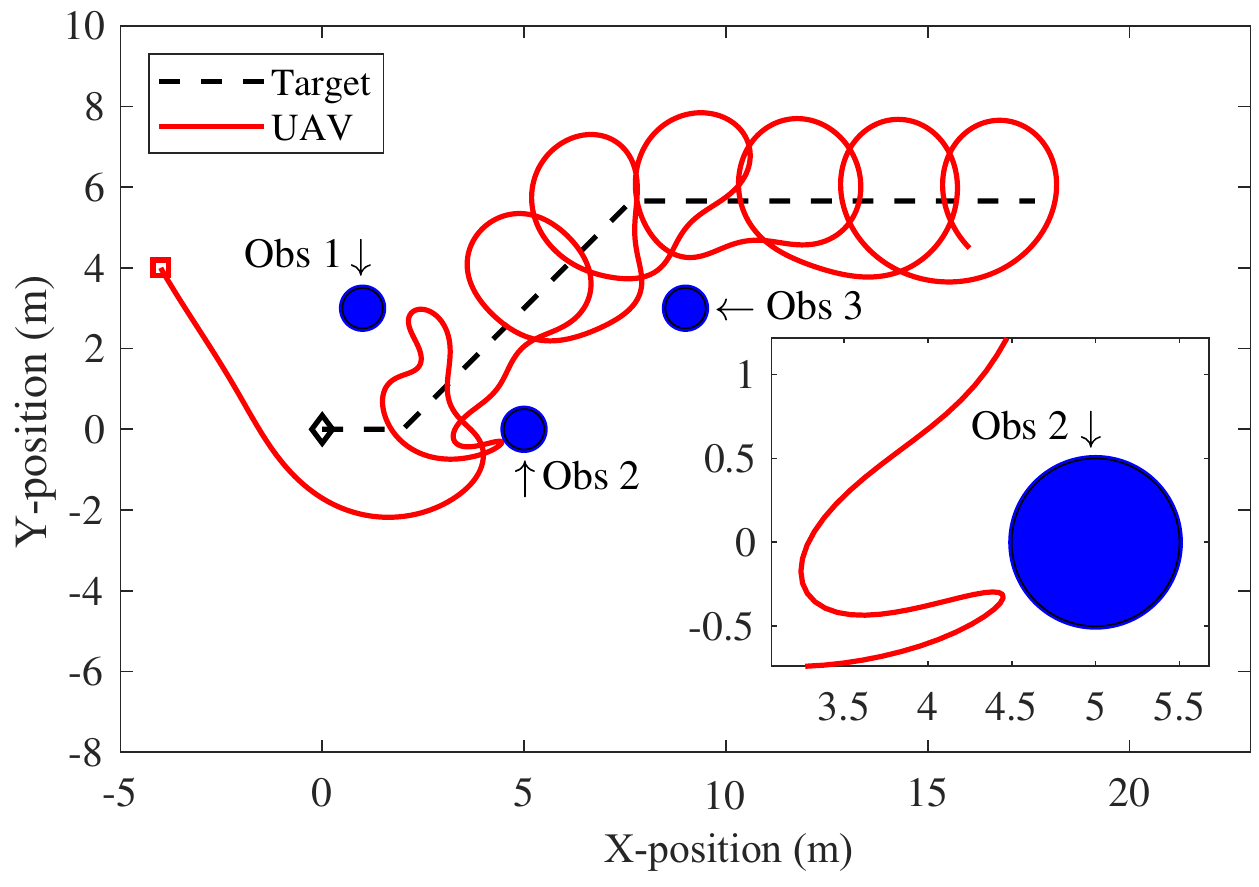}}
	\caption{Trajectory of the UAV with obstacle avoidance.}
	\label{figobs}
	\end{figure}
	
	Consider the standoff tracking in the presence of obstacles whose positions are assumed to be fixed and known. In this example, three circular obstacles with the radius of $0.5$\si{m} are included in Fig.~\ref{figobs}. We integrate the LGV \eqref{lya_cont1} with the inverse convergence vector \cite{wilhelm2019circumnavigation} for guidance, and use the DNN-based MPC for trajectory tracking. In Fig.~\ref{figobs}, the initial positions of the target and UAV are marked by the diamond and square, respectively, and one can observe that the standoff tracking with obstacle avoidance has completed, showing a potential application of our approach in cluttered environments.   
		
	\section{Conclusion} \label{sec7}
	To standoff track a moving target via a UAV, we have proposed an LGV guidance with tunable convergence rates for trajectory planning and a DNN-based MPC with an integral module for trajectory tracking. It was validated by FPGA-in-the-loop simulations that our promising improvements in the computational effort. Specifically, we have demonstrated that the proposed method represents a valid alternative to embedded implementations of MPC schemes, thus allowing its exploitation to more complex systems and applications, not limited to the one presented as test case in this paper. In the future work, we shall conduct flight tests to further validate its real-world performance.

	\section*{Acknowledgement}
	The authors would like to thank the Associate Editor and anonymous reviewers for their very constructive comments, which greatly improve the quality of this work.

	\bibliographystyle{IEEEtran}
	\bibliography{bib/mybib}
	
	\begin{IEEEbiography}
		[{\includegraphics[width=1in,height=1.25in,clip,keepaspectratio]{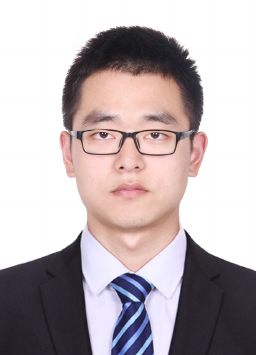}}]
		{Fei Dong} received the B.S. degree from the School of Control Science and Engineering, Shandong University, Jinan, China, in 2014, the M.S. degree from the School of Instrumentation and Optoelectronic Engineering, Beihang University, Beijing, China, in 2017, and the Ph.D. degree from the Department of Automation, Tsinghua University, Beijing, China, in 2022. 
		
		He is currently a Postdoctoral Researcher in the School of Automation Science and Electrical Engineering, Beihang University, Beijing, China. His research interests include model predictive control and learning-based control.
		\end{IEEEbiography}
	
		\begin{IEEEbiography}
		[{\includegraphics[width=1in,height=1.25in,clip,keepaspectratio]{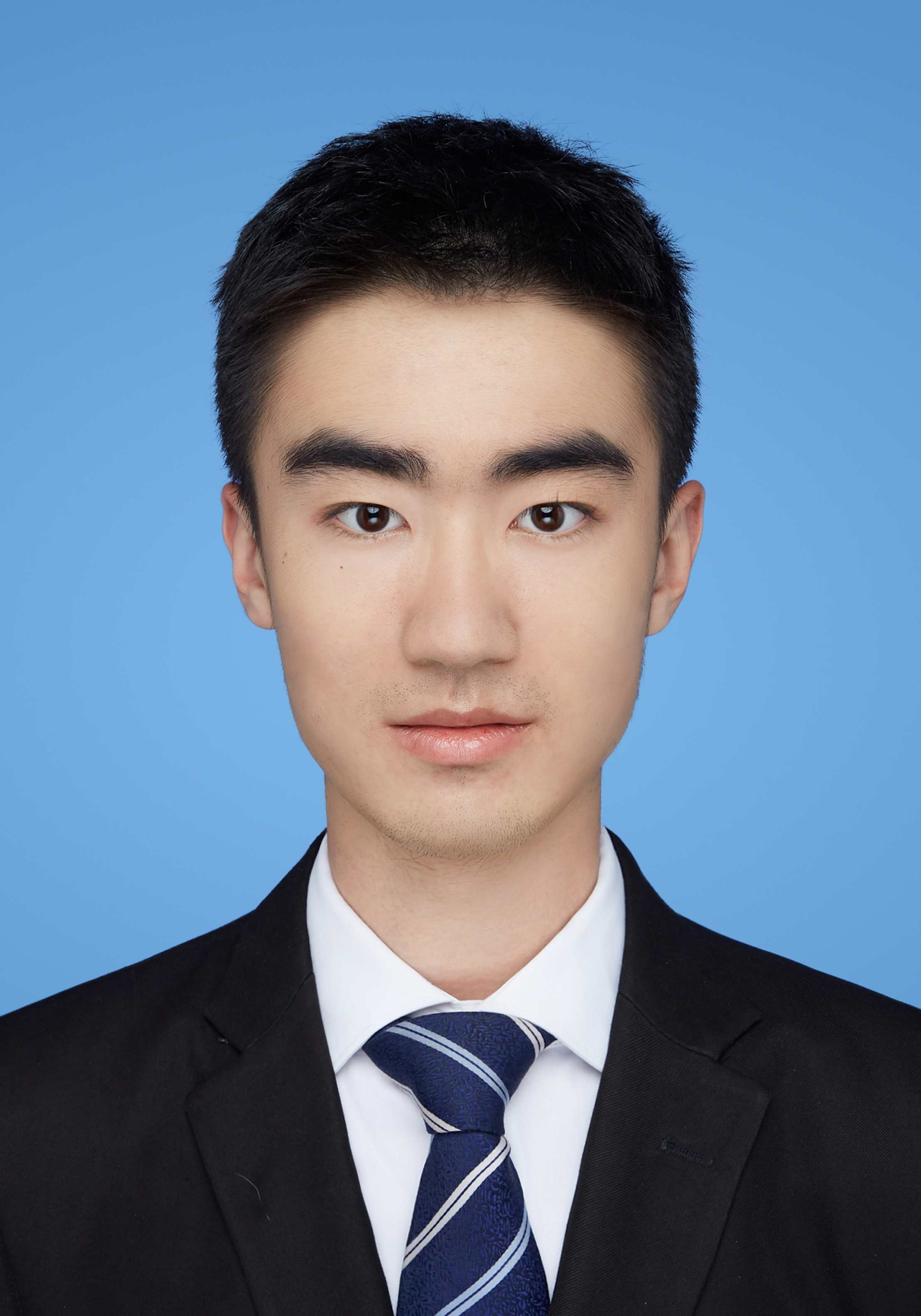}}]
		{Xingchen Li} received the B.S. degree from the Department of Automation, Tsinghua University, Beijing, China, in 2021. He is currently pursuing the Ph.D. degree at the Department of Automation, Tsinghua University, Beijing, China. His research interests include model predictive control and data-driven control. 
	  \end{IEEEbiography}
		
	\begin{IEEEbiography}
			[{\includegraphics[width=1in,height=1.25in,clip,keepaspectratio]{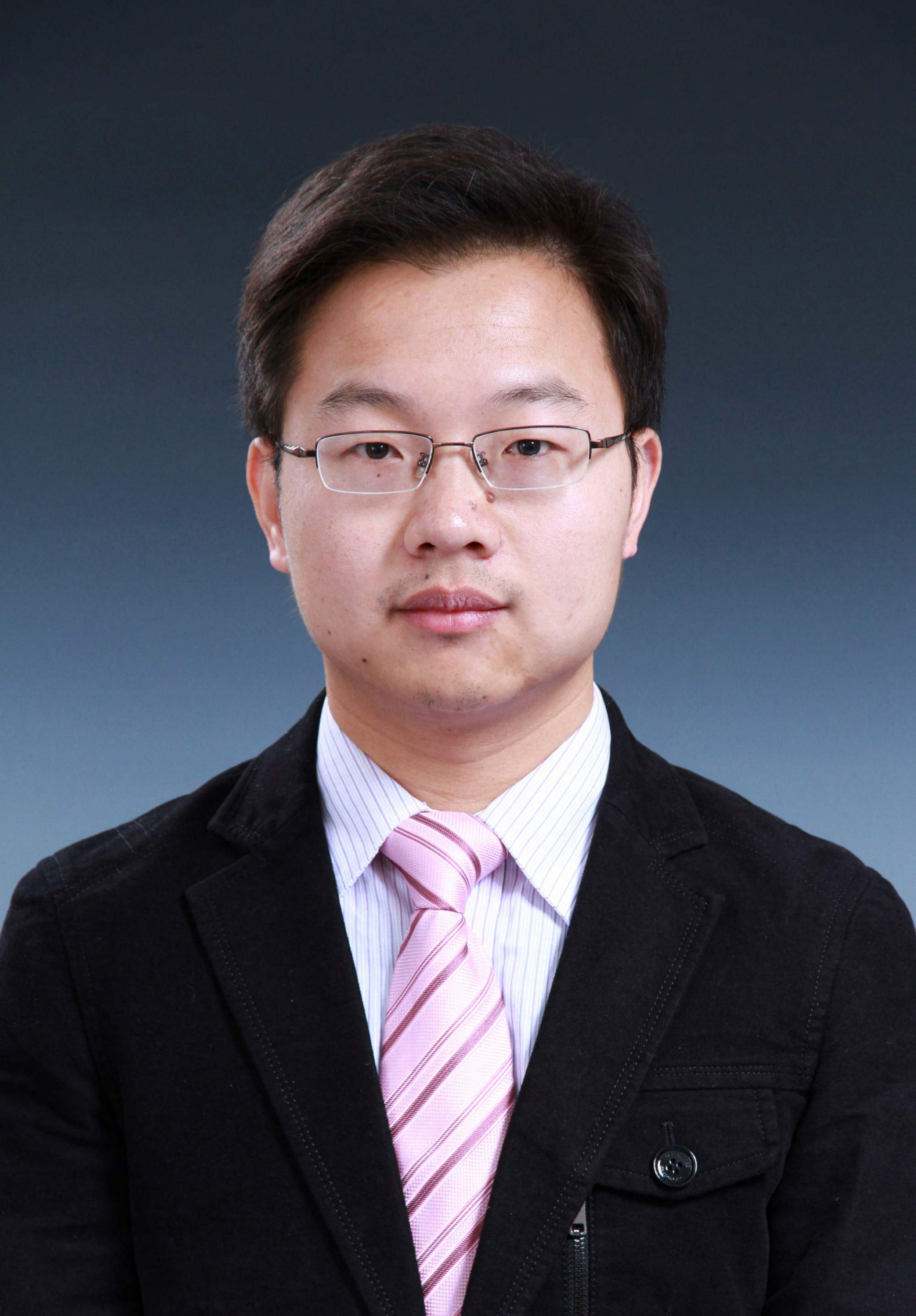}}]
			{Keyou You} (Senior Member, IEEE) received the B.S. degree in Statistical Science from Sun Yat-sen University, Guangzhou, China, in 2007 and the Ph.D. degree in Electrical and Electronic Engineering from Nanyang Technological University (NTU), Singapore, in 2012. After briefly working as a Research Fellow at NTU, he joined Tsinghua University in Beijing, China where he is now a tenured Associate Professor in the Department of Automation. He held visiting positions at Politecnico di Torino,  Hong Kong University of Science and Technology,  University of Melbourne and etc. His current research interests include networked control systems, distributed optimization and learning, and their applications.
	
	Dr. You received the Guan Zhaozhi award at the 29th Chinese Control Conference in 2010,  the ACA (Asian Control Association) Temasek Young Educator Award in 2019 and the first prize of Natural Science Award of the Chinese Association of Automation. He received the National Science Fund for Excellent Young Scholars in 2017, and is serving as an Associate Editor for Automatica,  IEEE Transactions on Control of Network Systems, IEEE Transactions on Cybernetics, and Systems \& Control Letters.
		\end{IEEEbiography}
		
\begin{IEEEbiography}
	[{\includegraphics[width=1in,height=1.25in,clip,keepaspectratio]{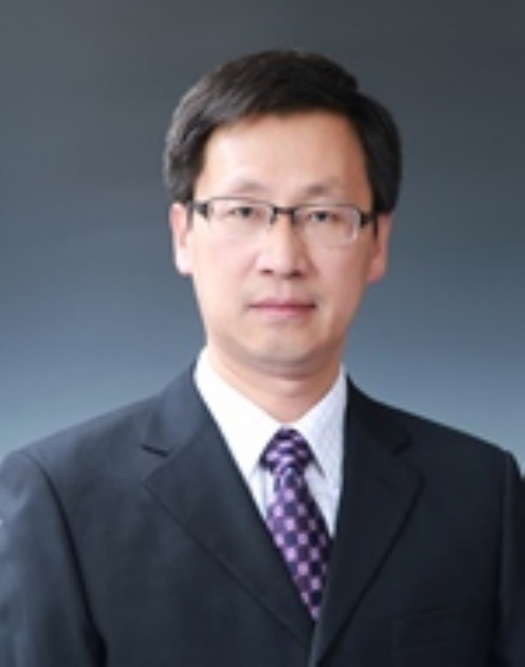}}]
	{Shiji Song} (Senior Member, IEEE) received the	Ph.D. degree in mathematics from the Department of 	Mathematics, Harbin Institute of Technology, Harbin, China, in 1996.
	
	He is currently a Professor with the Department	of Automation, Tsinghua University, Beijing, China.	He has authored more than 180 research papers. His research interests include pattern recognition, system modeling, optimization, and control.
\end{IEEEbiography}

\end{document}